\documentclass[10pt,a4paper,oneside]{article}


\usepackage[T1]{fontenc}
\usepackage[latin1]{inputenc}
\usepackage[english]{babel}
\usepackage{fancyhdr}
\usepackage{amsmath}
\usepackage{amsthm}
\usepackage{amsfonts}
\usepackage{amssymb}
\usepackage{amscd}
\usepackage{amsbsy}
\usepackage{lastpage}
\usepackage{enumitem}
\usepackage{mathrsfs}
\usepackage[sort&compress,longnamesfirst]{natbib}
\usepackage{comment}
\usepackage{color}
\usepackage[bookmarks=true,colorlinks=true,linkcolor=UniBlue,citecolor=UniRed,filecolor=UniBlue,urlcolor=UniBlue,pdftitle={Continuous-Time Portfolio Optimisation for a Behavioural Investor with Bounded Utility on Gains},pdfauthor={Mikl\'{o}s R\'{a}sonyi and Andrea Meireles Rodrigues},pdfcreator={Andrea Meireles Rodrigues},%
pdfsubject={This paper examines an optimal investment problem in a continuous-time (essentially) complete financial market with a finite horizon. We deal with an investor who behaves consistently with principles of Cumulative Prospect Theory, and whose utility function on gains is bounded above. The well-posedness of the optimisation problem is trivial, and a necessary condition for the existence of an optimal trading strategy is derived. This condition requires that the investor's probability distortion function on losses does not tend to $0$ near $0$ faster than a given rate, which is determined by the utility function. Under additional assumptions, we show that this condition is indeed the borderline for attainability, in the sense that for slower convergence of the distortion function there does exist an optimal portfolio.},%
pdfkeywords={Behavioural\ finance;\ } {Bounded\ utility;\ } {Choquet\ integral;\ } {Con\-tin\-u\-ous-time\ models;\ } {Market\ completeness;\ } {Non-concave\ utility;\ } {Optimal\ portfolio;\ } {Probability\ distortion.}]{hyperref}
\usepackage{cleveref}

\crefname{sec}{section}{sections}
\crefname{def}{definition}{definitions}
\crefname{th}{theorem}{theorems}
\crefname{lem}{lemma}{lemmata}
\crefname{prop}{proposition}{propositions}
\crefname{cor}{corollary}{corollaries}
\crefname{ex}{example}{examples}
\crefname{as}{assumption}{assumptions}
\crefname{obs}{remark}{remarks}
\crefname{app}{appendix}{appendices}

\definecolor{UniBlue}{RGB}{0,50,95}
\definecolor{UniRed}{RGB}{193,0,67}

\bibpunct{[}{]}{;}{n}{,}{,}


\newcommand{\bbOne}{1\hspace*{-0.8ex}1}

\numberwithin{equation}{section} 

\newtheorem{theorem}{Theorem}[section]
\newtheorem{proposition}[theorem]{Proposition}
\newtheorem{lemma}[theorem]{Lemma}
\newtheorem{corollary}[theorem]{Corollary}
\newtheorem{assumption}[theorem]{Assumption}

\theoremstyle{definition}
\newtheorem{definition}[theorem]{Definition}

\theoremstyle{remark}
\newtheorem{remark}[theorem]{Remark}

\newtheoremstyle{example}
  {0.5\topsep}
  {0.5\topsep}
  {\itshape}
  {0pt}
  {\normalfont}
  {.}
  {5pt plus 1pt minus 1pt}
  {}
\theoremstyle{example}
\newtheorem{example}[theorem]{Example}


\title{Continuous-Time Portfolio Optimisation\\for a Behavioural Investor with Bounded Utility on Gains%
\thanks{The authors wish to thank an anonymous referee for a careful reading of their manuscript and valuable suggestions. The first version of this paper was finalised during the workshop ``Modeling Market Dynamics and Equilibrium'' at the Hausdorff Institute, Bonn, in August 2013. M.~R\'{a}sonyi thanks the organisers for their kind invitation and the Institute for its hospitality. A.\,M.~Rodrigues gratefully acknowledges the financial support of FCT - Funda\c{c}\~{a}o para a Ci\^{e}ncia e Tecnologia (Portuguese Foundation for Science and Technology) through the Doctoral Grant SFRH/BD/69360/2010.}}

\date{1st September 2013}

\author{Miklós~Rásonyi\footnote{MTA Alfr\'{e}d R\'{e}nyi Institute of Mathematics, Budapest, Hungary, and School of Mathematics, University of Edinburgh, Edinburgh, Scotland, U.K..}\and Andrea~M.~Rodrigues\footnote{Corresponding author. School of Mathematics, University of Edinburgh, Edinburgh, Scotland, U.K..\newline\hspace*{1.8em}E-mail addresses:~\texttt{\href{mailto:Miklos.Rasonyi@ed.ac.uk}{Miklos.Rasonyi@ed.ac.uk}} and \texttt{\href{mailto:A.S.Meireles-Rodrigues@sms.ed.ac.uk}{A.S.Meireles-Rodrigues@sms.ed.ac.uk}}.}}



\begin{document}

\fancypagestyle{plain}{ %
  \fancyhf{} 
  \renewcommand{\headrulewidth}{0pt} 
  \renewcommand{\footrulewidth}{0pt}
}
\thispagestyle{plain}

\maketitle

\begin{abstract}
	This paper examines an optimal investment problem in a continuous-time (essentially) complete financial market with a finite horizon. We deal with an investor who behaves consistently with principles of Cumulative Prospect Theory, and whose utility function on gains is bounded above. The well-posedness of the optimisation problem is trivial, and a necessary condition for the existence of an optimal trading strategy is derived. This condition requires that the investor's probability distortion function on losses does not tend to $0$ near $0$ faster than a given rate, which is determined by the utility function. Under additional assumptions, we show that this condition is indeed the borderline for attainability, in the sense that for slower convergence of the distortion function there does exist an optimal portfolio.\\
	\\
	\noindent
	\textbf{Keywords:}
	Behavioural finance~; Bounded utility~; Choquet in\-te\-gral~; Con\-tin\-u\-ous-time models~; Market completeness~; Non-concave utility~; Optimal portfolio~; Probability distortion.
	
	\noindent
	\textbf{AMS MSC 2010:} Primary 91G10, Secondary 49J55 ; 60H30 ; 93E20.
\end{abstract}


\pagestyle{fancy}
\fancyhf{} 
\renewcommand{\headrulewidth}{0pt} 
\renewcommand{\footrulewidth}{0pt}
\cfoot{Page \thepage/\pageref{LastPage}}
\chead{{\small Portfolio Optimisation for a Behavioural Investor with Bounded Utility on Gains}}

\section{Introduction and Summary}\label[sec]{sec:Intro}

The optimal investment problem is a classical one in financial mathematics, and it has been widely studied in the framework of Expected Utility Theory (EUT, for short), formulated by \citet{neumann53}. This theory presumes that any rational investor's preferences can be numerically represented by a so-called utility function, usually assumed concave and increasing.

Over the years, as some of EUT's fundamental principles have been questioned by empirical studies, several alternative theories have emerged, amongst which the Cumulative Prospect Theory (CPT) proposed by \citet{kahneman79} and \citet{tversky92}. Within this framework, the utility function, which is still assumed to be strictly increasing with wealth, is no longer globally concave. This is because investors, whilst generally risk averse on gains, were found to become risk seeking when undergoing losses. The existence of a reference point defining gains and losses is also presumed, a feature that is absent in EUT. Lastly, according to CPT, economic agents find it hard to assess probabilities rationally and objectively. Instead, they are subjective and systematically miscalculate probabilities (for example, events of small probability tend to be overweighted), which is modelled with functions distorting the probability measure.

As a consequence, the behavioural agent's objective functional to be maximised involves a nonlinear Choquet integral. This raises new, mathematically complex challenges, and the most common approaches to solving the EUT portfolio problem, such as dynamic programming or the use of convex duality methods, are not suitable anymore.

The literature on cumulative prospect theory in continuous-time models is scarce. \citet{berkelaar04}, \citet{carlier-dana2011}, and \citet{reichlin2012} consider utilities defined on the positive real axis (and we remark further that, in the first paper, no probability distortions are considered, which considerably simplifies the problem). The only studies about the whole real line case are \citet{jin2008} and \citet{rasonyi2013}. \citet{jin2008} find explicit solutions in certain cases, but under hypotheses (see Assumption~4.1 therein) which are neither easily verifiable nor economically interpretable. Existence of optimisers for the case of power-like distortion and utility functions has been shown in \citet{rasonyi2013}, with necessary and sufficient conditions on the parameters. However, the case of utilities growing slower than a power function remained open. We address this problem in the present paper, in the setting of bounded above utility functions.

As it is widely stated in the literature, the paper by \citet{menger34} (whose English translation can be found in \citep{menger67}) appears to have been the first to assert the necessity of a boundedness assumption on the utility function in order to avoid a St.~Petersburg-type paradox. Even though this has lead to a considerable amount of debate, several authors have since advocated and made further arguments for considering bounded utilities (see e.g.\ \citet{arrow70,arrow74,arrow51}, \citet{markowitz76}, and \citet{savage54}, to cite only a few). We refer to \citet{muravievrogers13}, who provide a strong argument against unbounded utilities (which they attribute to Kenneth Arrow). Thus, in this paper we restrict ourselves to the case where the utility is bounded above. As \Cref{obs:NecI} below shows, we cannot impose that the utility is bounded below, as this would contradict the existence of an optimiser.

In \Cref{sec:SetUp}, the model is presented, the principles of CPT are formalised, and the optimisation problem is rigorously stated. \Cref{sec:WPAttain} deals with the issues of well-posedness and existence, \Cref{sec:Conclusion} concludes. For the sake of a simple exposition, all auxiliary results and proofs are compiled in \Cref{app:Proofs}.

\section{Notation and Set-Up}\label[sec]{sec:SetUp}

\subsection{The Market}

Let us consider a continuous-time and frictionless financial market with trading interval $\left[0,T\right]$, where $T\in\left(0,+\infty\right)$ is a fixed nonrandom horizon. As usual, we start with a complete probability space $\left(\Omega,\mathscr{F},\mathbb{P}\right)$. We suppose further that the evolution of information through time is modelled by a filtration, $\mathbb{F}=\left\{\mathscr{F}_{t}\text{; }0\leq t\leq T\right\}$, satisfying the \emph{usual conditions} of right-continuity and saturatedness. Finally, we assume for convenience that the $\sigma$-algebra $\mathscr{F}_{0}$ is $\mathbb{P}$-trivial, and also that $\mathscr{F}=\mathscr{F}_{T}$.

Next, we fix an arbitrary $d\in\mathbb{N}$, and introduce a $d$-dimensional c\`adl\`ag, adapted process $S=\left\{S_{t}\text{; }0\leq t\leq T\right\}$. For each $i\in\left\{1,\ldots,d\right\}$, $S_{t}^{i}$ represents the price of a certain risky asset $i$ at time $t$. In addition to these $d$ risky {securities}, we shall assume that the market contains a riskless asset %
$S_{t}^{0}\equiv 1$ for any $t\in\left[0,T\right]$. 
Therefore, we shall work directly with discounted prices. Let us make the following technical assumptions throughout.

\begin{assumption}\label[as]{as:continuousCDFrho}
	There exists a measure $\mathbb{Q}$ on $\left(\Omega,\mathscr{F}\right)$, equivalent to $\mathbb{P}$ (we write $\mathbb{Q}\sim\mathbb{P}$), such that the (discounted) price process $S$ is a $\mathbb{Q}$-local martingale.\footnote{In particular, $S$ is a semi-martingale.} Furthermore, setting $\rho\triangleq d\mathbb{Q}/d\mathbb{P}$ (the Radon-Nikodym derivative of $\mathbb{Q}$ with respect to $\mathbb{P}$), the cumulative distribution function (CDF) of $\rho$ under $\mathbb{P}$, denoted by $F_{\rho}^{\mathbb{P}}$, is continuous.\footnote{We recall that the \emph{cumulative distribution function} of $\rho$, with respect to the probability measure $\mathbb{P}$, is given by $F_{\rho}^{\mathbb{P}}\!\left(x\right)=\mathbb{P}\!\left(\rho \leq x\right)$, for every real number $x$. We note further that $F_{\rho}^{\mathbb{Q}}$ is also continuous by $\mathbb{Q}\sim\mathbb{P}$.}
\end{assumption}

\begin{assumption}\label[as]{as:esssup}
	The essential supremum of $\rho$ with respect to $\mathbb{P}$, $\mathrm{ess\,sup}_{\mathbb{P}}\,\rho$, is infinite.
\end{assumption}

\begin{assumption}\label[as]{as:rhoinW}
	Both $\rho$ and $1/\rho$ belong to $\mathscr{W}$, where $\mathscr{W}$ is defined as the family of all real-valued random variables $Y$ satisfying $\mathbb{E}_{\mathbb{P}}\!\left[\left|Y\right|^{p}\right]<+\infty$ for all $p>0$.
\end{assumption}

We recall that a \emph{portfolio} (or \emph{trading strategy}) over the time interval $\left[0,T\right]$ is an 
$S$-integrable, $\mathbb{R}^{d}$-valued stochastic process $\left\{\phi_{t}\text{; }0\leq t\leq T\right\}$. For every $i\in\left\{1,\ldots,d\right\}$, $\phi_{t}^{i}$ represents the position in the $i$-th asset at time $t$. We assume that trading is self-financing, so the (discounted) value $\Pi_t^{\phi}$ of the portfolio at $t$, for all $t\in\left[0,T\right]$, is given by $\Pi_{t}^{\phi}=x_0+\int_{0}^{t}\phi_{s}\,dS_{s}$, where $x_0$ is the investor's initial capital. The set of portfolios is denoted by $\Phi\!\left(x_{0}\right)$.

In order to preclude arbitrage opportunities, we must restrict ourselves to a subset $\Psi\!\left(x_{0}\right)\subseteq \Phi\!\left(x_{0}\right)$ of \emph{admissible} strategies. Amongst several possible admissibility criteria, one which is often adopted in the literature is that the portfolio's wealth process should be uniformly bounded below by some constant (possibly depending on the portfolio). However, in the present paper, as in \citet{rasonyi2013} and for the reasons given therein, we assume that admissible strategies are those whose (discounted) wealth process is a martingale under $\mathbb{Q}$ (and not only a local martingale).

Finally, we fix a scalar-valued random variable $B$ satisfying $\mathbb{E}_{\mathbb{Q}}\!\left[\left|B\right|\right]<+\infty$, representing a benchmark. Hereafter, we shall also assume, essentially, that the market is complete.
\begin{assumption}\label[as]{as:kindcompl}
	The random variable $B$ and all $\sigma\!\left(\rho\right)$-measurable random variables in $L^{1}\!\left(\mathbb{Q}\right)$ (i.e., integrable with respect to the measure $\mathbb{Q}$) are replicable, that is, each of them is equal to the terminal value of some admissible portfolio $\phi\in\Psi\!\left(x_{0}\right)$.
\end{assumption}

\subsection{The Investor}

We consider a small \emph{CPT investor} with a given initial capital $x_{0}\in\mathbb{R}$.

Firstly, the agent is assumed to have a \emph{reference point}, represented by the replicable claim $B$ introduced above, with respect to which payoffs are evaluated. Thus, given a payoff $X$ at the terminal time $T$ and a scenario $\omega\in\Omega$, the investor is said to make a \emph{gain} (respectively, a \emph{loss}) if $X\!\left(\omega\right)>B\!\left(\omega\right)$ (respectively, $X\!\left(\omega\right)<B\!\left(\omega\right)$).

Secondly, the agent's preferences towards risk are described by a \emph{non-concave utility function} $u:\mathbb{R}\rightarrow\mathbb{R}$, given by%
\begin{equation}
	u\!\left(x\right)\triangleq u_{+}\!\left(x^{+}\right)\bbOne_{\left[\left.0,+\infty\right)\right.}\!\left(x\right)-u_{-}\!\left(x^{-}\right)\bbOne_{\left(-\infty,0\right)}\!\left(x\right),\qquad x\in\mathbb{R},%
	\footnote{%
		Here, $x^{+}\triangleq \max\!\left\{x,0\right\}$ and $x^{-}\triangleq -\min\!\left\{x,0\right\}$, for any real number $x$.}
\end{equation}
where the strictly increasing, continuous functions $u_{\pm}:\left[\left.0,+\infty\right)\right.\rightarrow \left[\left.0,+\infty\right)\right.$, satisfy $u_{\pm}\!\left(0\right)=0$. %
Note that no assumptions are made concerning the differentiability or the concavity of the functions. Moreover, it is clear that the functions $u_{\pm}$ have (possibly infinite) limits as $x\rightarrow+\infty$. In what follows, the notation $u_{\pm}\!\left(+\infty\right)\triangleq \lim_{x\rightarrow+\infty} u_{\pm}\!\left(x\right)$ will be used.
\begin{assumption}[\textbf{Bounded utility on gains}]\label[as]{as:boundedup}
	The utility on gains is bound\-ed above, i.e., $M\triangleq u_{+}\!\left(+\infty\right)<+\infty$.
\end{assumption}

\begin{example}
	\begin{enumerate}[label=\emph{(\roman*)}]
		\item
		The \emph{exponential utility} with parameter $\alpha>0$ is the function 
		given by $u\!\left(x\right)\triangleq 1-e^{-\alpha x}$ for all $x\geq 0$.
		
		\item
		The \emph{power utility} with parameter $\alpha\in \mathbb{R}\setminus\left\{0\right\}$ is the function $u:\left[\left.0,+\infty\right)\right.\rightarrow \left[\left.0,+\infty\right)\right.$ defined as $u\!\left(x\right)\triangleq x^{\alpha}$ for $\alpha> 0$ and $u\!\left(x\right)\triangleq 1-(1+x)^{\alpha}$ for $\alpha<0$. It is trivial that $u$ is bounded above if and only if $\alpha<0$.
	
		\item
		The \emph{logarithmic utility} is the function defined by $u\!\left(x\right)\triangleq \log\!\left(1+x\right)$ for every $x\geq 0$.
	\end{enumerate}
\end{example}

The third and most prominent feature of CPT is that the investor has a distorted perception of the actual probabilities, which is modelled by the two strictly increasing, continuous \emph{probability distortion}
functions $w_{\pm}:\left[0,1\right]\rightarrow\left[0,1\right]$ (\emph{on gains} and \emph{on losses}, respectively), with $w_{\pm}\!\left(0\right)=0$ and $w_{\pm}\!\left(1\right)=1$. The economic agent is said to \emph{overweight} (respectively, \emph{underweight}) small-prob\-a\-bil\-i\-ty losses if, for all $x$ in some right-neighbourhood of zero, we have $w_{-}\!\left(x\right)\geq x$ (respectively, $w_{-}\!\left(x\right)\leq x$). An entirely analogous definition can be given for small-probability gains.

\begin{example}
	\begin{enumerate}[label=\emph{(\roman*)}]
		\item
		The \emph{power distortion} with parameter $\beta>0$ is the function given by $w\!\left(x\right)\triangleq x^{\beta}$ for every $x\in\left[0,1\right]$.
		
		\item
		The distortion defined as $w\!\left(x\right)\triangleq \exp\!\left\{-\beta\left[-\log\!\left(x\right)\right]^{\varpi}\right\}\bbOne_{\left.\left(0,1\right.\right]}\!\left(x\right)$ for all $x\in\left[0,1\right]$, with parameters $\varpi\in\left(0,1\right)$ and $\beta>0$, was first proposed by \citet{prelec98}.
	\end{enumerate}
\end{example}

\subsection{The Optimal Investment Problem}

The continuous-time portfolio selection problem for a behavioural investor with CPT preferences consists of choosing an optimal investment strategy, that is, one that maximises a certain expected distorted payoff functional. 
\begin{definition}[\textbf{Behavioural optimal investment problem}]
	The math\-e\-mat\-i\-cal formulation of the \emph{behavioural optimal portfolio problem} is:
	\begin{equation}\label{eq:optport}
		\textrm{maximise } \left\{V\!\left(\Pi^{\phi}_{T}-B\right)=V_{+}\!\left(\left[\Pi^{\phi}_{T}-B\right]^{+}\right)-V_{-}\!\left(\left[\Pi^{\phi}_{T}-B\right]^{-}\right)\right\}\footnote{Note that $V\!\left(\Pi^{\phi}_{T}-B\right)$ may well be $-\infty$ for certain $\phi\in\Psi\!\left(x_{0}\right)$.}
	\end{equation}
	over $\phi \in \Psi\!\left(x_{0}\right)$, where
	\begin{equation}
		V_{\pm}\!\left(\left[\Pi^{\phi}_{T}-B\right]^{\pm}\right)\triangleq \int_{0}^{+\infty}w_{\pm}\!\left(\mathbb{P}\!\left\{u_{\pm}\!\left(\left[\Pi^{\phi}_{T}-B\right]^{\pm}\right)>y\right\}\right)\,dy.
	\end{equation}
Setting $V^{*}\!\left(x_{0}\right)\triangleq \sup\left\{V\!\left(\Pi^{\phi}_{T}-B\right)\text{: }\phi \in \Psi\!\left(x_{0}\right)\right\}$, we say that $\phi^{*}\in\Psi\!\left(x_{0}\right)$ is an \emph{optimal strategy} if $V\!\left(\Pi^{\phi^{*}}_{T}-B\right)=V^{*}\!\left(x_{0}\right)$. 
\end{definition}

\begin{remark}\label[obs]{obs:nearopt}
	One may wonder why the existence of an optimal $\phi^{*}$ is relevant when the existence of $\varepsilon$-optimal strategies $\phi^{\varepsilon}$ (i.e., ones that are $\varepsilon$-close to the supremum over all strategies) is automatic, for all $\varepsilon>0$. There are at least two, closely related reasons for this.

	Firstly, non-existence of an optimal $\phi^{*}$ usually means that an optimiser sequence $\left\{\phi^{1/n}\text{; }n\in\mathbb{N}\right\}$ shows wild, extreme behaviour (e.g., they converge to infinity, see Example~7.3 of \citet{rs05}). Such strategies are both practically infeasible and economically counter-intuitive.

	Secondly, existence of $\phi^{*}$ normally goes together with some compactness property (tightness of laws in the present paper). Such a property seems necessary for the convergence of any potential numerical procedure to find an optimal (or at least an $\varepsilon$-optimal) strategy.
\end{remark}

Henceforward, we shall assume for simplicity that $B=0$. We may do this without loss of generality since $B$ is replicable by \Cref{as:kindcompl}.

\section{Well-Posedness and Attainability}\label[sec]{sec:WPAttain}

Well-posedness is trivial in our current setting.
\begin{proposition}\label[prop]{prop:boundeduplus} Under \Cref{as:boundedup},
$V^{*}\!\left(x_{0}\right)\leq u_{+}\!\left(+\infty\right)<+\infty$.\qed
\end{proposition}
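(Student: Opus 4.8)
Since we have reduced to $B=0$, fix an arbitrary admissible $\phi\in\Psi\!\left(x_{0}\right)$ and consider $V\!\left(\Pi^{\phi}_{T}\right)=V_{+}\!\left(\left[\Pi^{\phi}_{T}\right]^{+}\right)-V_{-}\!\left(\left[\Pi^{\phi}_{T}\right]^{-}\right)$. The plan is simply to drop the (nonnegative) loss term and bound the gain term by a crude estimate exploiting \Cref{as:boundedup}. First I would note that $V_{-}\!\left(\left[\Pi^{\phi}_{T}\right]^{-}\right)=\int_{0}^{+\infty}w_{-}\!\left(\mathbb{P}\!\left\{u_{-}\!\left(\left[\Pi^{\phi}_{T}\right]^{-}\right)>y\right\}\right)\,dy\geq 0$, being the integral of a nonnegative integrand (it may equal $+\infty$, but then $V\!\left(\Pi^{\phi}_{T}\right)=-\infty$ and the asserted inequality is trivial for that $\phi$). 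Hence in all cases $V\!\left(\Pi^{\phi}_{T}\right)\leq V_{+}\!\left(\left[\Pi^{\phi}_{T}\right]^{+}\right)$.

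Next I would estimate $V_{+}\!\left(\left[\Pi^{\phi}_{T}\right]^{+}\right)=\int_{0}^{+\infty}w_{+}\!\left(\mathbb{P}\!\left\{u_{+}\!\left(\left[\Pi^{\phi}_{T}\right]^{+}\right)>y\right\}\right)\,dy$. Because $u_{+}$ is bounded above by $M=u_{+}\!\left(+\infty\right)$, we have $u_{+}\!\left(\left[\Pi^{\phi}_{T}\right]^{+}\right)\leq M$ almost surely, so the event $\left\{u_{+}\!\left(\left[\Pi^{\phi}_{T}\right]^{+}\right)>y\right\}$ is $\mathbb{P}$-null, and the integrand vanishes, for every $y\geq M$. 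For $y\in\left[0,M\right)$ I would use that $w_{+}$ is increasing together with $w_{+}\!\left(1\right)=1$ and that $\mathbb{P}\!\left\{\cdots\right\}\leq 1$, to get $w_{+}\!\left(\mathbb{P}\!\left\{u_{+}\!\left(\left[\Pi^{\phi}_{T}\right]^{+}\right)>y\right\}\right)\leq w_{+}\!\left(1\right)=1$. Integrating over $\left[0,M\right)$ yields $V_{+}\!\left(\left[\Pi^{\phi}_{T}\right]^{+}\right)\leq M$, whence $V\!\left(\Pi^{\phi}_{T}\right)\leq M$.

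Finally, taking the supremum over all $\phi\in\Psi\!\left(x_{0}\right)$ gives $V^{*}\!\left(x_{0}\right)\leq M=u_{+}\!\left(+\infty\right)$, which is finite by \Cref{as:boundedup}. There is no real obstacle here: the argument is a one-line consequence of the boundedness of $u_{+}$ and the normalisation $w_{+}\!\left(1\right)=1$; the only minor point worth a word is that some strategies may give $V\!\left(\Pi^{\phi}_{T}\right)=-\infty$, but this is harmless for an upper bound on the supremum. This is exactly why the text can call well-posedness ``trivial'' in this setting — in contrast to the unbounded case, no integrability or no-arbitrage subtleties enter.
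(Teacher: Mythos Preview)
Your argument is correct and is precisely the trivial computation the paper has in mind: the proposition is stated with a \qed\ and no proof, because the bound $V_{+}\leq M$ from $u_{+}\leq M$ and $w_{+}\leq 1$, together with $V_{-}\geq 0$, is exactly what you wrote. There is nothing to add.
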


It may still be the case that an optimal solution does not exist. We must now study whether or not this finite supremum $V^{*}\!\left(x_{0}\right)$ is indeed a maximum, that is, whether or not the optimisation problem is \emph{attainable}. A first and important answer is given by the following result.
\begin{theorem}[\textbf{Necessary condition I}]\label[th]{th:NecI}
	Under \Cref{as:continuousCDFrho,as:kindcompl,as:esssup,as:boundedup}, there exists an optimal portfolio for problem~\eqref{eq:optport} only if
	\begin{equation}
		\liminf_{x\rightarrow 0^{+}} w_{-}\!\left(x\right)u_{-}\!\left(\frac{1}{x}\right)>0.
	\end{equation}
\end{theorem}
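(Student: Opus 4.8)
The plan is to argue by contradiction: suppose an optimal portfolio $\phi^{*}$ exists, so that $V^{*}(x_{0})=V(\Pi^{\phi^{*}}_{T})$ is attained, and suppose that $\liminf_{x\to 0^{+}}w_{-}(x)u_{-}(1/x)=0$. I would first record the easy upper bound $V^{*}(x_{0})\le M=u_{+}(+\infty)$ from Proposition~\ref{prop:boundeduplus}, and then construct a sequence of admissible strategies whose objective values converge to $M$, while simultaneously showing that $V(\Pi^{\phi^{*}}_{T})<M$ strictly, yielding a contradiction. Because the market is (essentially) complete by Assumptions~\ref{as:kindcompl} and the relevant payoffs depend only on $\rho$, I would work in the "static" formulation: admissible terminal wealths $X=\Pi^{\phi}_{T}$ are exactly the $\sigma(\rho)$-measurable random variables in $L^{1}(\mathbb{Q})$ with $\mathbb{E}_{\mathbb{Q}}[X]=x_{0}$, i.e. $\mathbb{E}_{\mathbb{P}}[\rho X]=x_{0}$, where $F_{\rho}^{\mathbb{P}}$ is continuous (Assumption~\ref{as:continuousCDFrho}) and $\operatorname{ess\,sup}_{\mathbb{P}}\rho=+\infty$ (Assumption~\ref{as:esssup}).

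For the construction, I would exploit that $\rho$ is unbounded: for a small threshold parameter, put a large positive payoff on the (low-probability) set $\{\rho\le a\}$ where pricing is cheap, and a controlled loss on $\{\rho>a\}$ to finance it while keeping $\mathbb{E}_{\mathbb{P}}[\rho X]=x_{0}$. Concretely, along a sequence $a_{n}\downarrow 0$ one can choose payoffs $X_{n}$ that equal a large constant $c_{n}\to+\infty$ on $\{\rho\le a_{n}\}$ and equal a loss level $-\ell_{n}$ on the complement, with the budget constraint fixing the relation between $c_{n}$, $\ell_{n}$ and $p_{n}:=\mathbb{P}(\rho\le a_{n})$. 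Using $\mathbb{P}$-distortion, the gain part contributes $w_{+}(p_{n})\,u_{+}(c_{n})$, which can be made to tend to $M$: since $p_{n}\to 0$ but $c_{n}\to+\infty$, and $u_{+}$ is bounded by $M$ with $u_{+}(+\infty)=M$, one needs $w_{+}(p_{n})\to 1$ — so actually the gain should be spread over a set of probability close to $1$, not close to $0$. I would therefore reverse the roles: take the gain region to be $\{\rho\le a_{n}\}$ with $a_{n}\uparrow\operatorname{ess\,sup}\rho=+\infty$ so that $p_{n}=\mathbb{P}(\rho\le a_{n})\uparrow 1$, where the payoff is a constant $c_{n}\to+\infty$, financed by a loss $-\ell_{n}$ on the small set $\{\rho>a_{n}\}$. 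The budget constraint reads $c_{n}\,\mathbb{E}_{\mathbb{P}}[\rho\bbOne_{\{\rho\le a_{n}\}}]-\ell_{n}\,\mathbb{E}_{\mathbb{P}}[\rho\bbOne_{\{\rho> a_{n}\}}]=x_{0}$; since $\mathbb{E}_{\mathbb{P}}[\rho]=1$ and $\rho\in\mathscr W$ (Assumption~\ref{as:rhoinW}), the tail term $\mathbb{E}_{\mathbb{P}}[\rho\bbOne_{\{\rho>a_{n}\}}]\to 0$, so $\ell_{n}$ can be taken to grow, and the key computation is that the loss contribution $V_{-}$ of $X_{n}$ is controlled by $w_{-}(\mathbb{P}(\rho>a_{n}))\,u_{-}(\ell_{n})$ up to the Choquet-integral bookkeeping. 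Choosing $\ell_{n}\sim 1/q_{n}$ with $q_{n}:=\mathbb{P}(\rho>a_{n})\to 0$, the hypothesis $\liminf w_{-}(x)u_{-}(1/x)=0$ lets me extract a subsequence along which $w_{-}(q_{n})\,u_{-}(\ell_{n})\to 0$, so $V_{-}(X_{n}^{-})\to 0$ while $V_{+}(X_{n}^{+})=w_{+}(p_{n})u_{+}(c_{n})\to M$, giving $V(X_{n})\to M=V^{*}(x_{0})$; hence this is an optimiser sequence.

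Finally I would show that no single admissible $X^{*}$ can achieve $V(X^{*})=M$. If $V_{+}((X^{*})^{+})=M$ then, because $u_{+}<M$ pointwise and $w_{+}$ is continuous with $w_{+}(1)=1$, writing $V_{+}((X^{*})^{+})=\int_{0}^{M}w_{+}(\mathbb{P}(u_{+}((X^{*})^{+})>y))\,dy$ and noting the integrand is $\le 1$ with equality only when $\mathbb{P}(u_{+}((X^{*})^{+})>y)=1$, one is forced to have $\mathbb{P}(u_{+}((X^{*})^{+})>y)=1$ for a.e.\ $y<M$, i.e. $u_{+}((X^{*})^{+})=M$ a.s., i.e. $(X^{*})^{+}=+\infty$ on a set of full measure — impossible for a real-valued random variable. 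Hence $V_{+}((X^{*})^{+})<M$ strictly, and since $V_{-}\ge 0$ we get $V(X^{*})<M=V^{*}(x_{0})$, contradicting optimality of $\phi^{*}$. Therefore the $\liminf$ condition must hold.

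I expect the main obstacle to be the precise handling of the Choquet integral for the loss part: one must rewrite $V_{-}(X_{n}^{-})=\int_{0}^{\infty}w_{-}(\mathbb{P}(u_{-}(X_{n}^{-})>y))\,dy$ for the chosen two-valued (or piecewise-constant) payoffs and bound it cleanly by something of the form $w_{-}(q_{n})u_{-}(\ell_{n})$, while simultaneously ensuring the budget constraint $\mathbb{E}_{\mathbb{P}}[\rho X_{n}]=x_{0}$ and the $L^{1}(\mathbb{Q})$-admissibility hold exactly; getting $a_{n}$, $c_{n}$, $\ell_{n}$, $q_{n}$ to move together in the right way — in particular choosing $a_{n}$ so that $q_{n}=\mathbb{P}(\rho>a_{n})$ decays at a rate matched to the $\liminf$ subsequence while the tail expectation $\mathbb{E}_{\mathbb{P}}[\rho\bbOne_{\{\rho>a_{n}\}}]$ stays small enough (here Assumption~\ref{as:rhoinW} is essential) — is the delicate bookkeeping step. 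The continuity of $F_{\rho}^{\mathbb{P}}$ is what guarantees we can hit any prescribed probability level $p_{n}$ or $q_{n}$ exactly by a set of the form $\{\rho\le a_{n}\}$.
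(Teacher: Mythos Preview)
Your outline follows the same strategy as the paper: construct two-valued $\sigma(\rho)$-measurable payoffs taking a large gain on $\{\rho\le b_n\}$ (of $\mathbb{P}$-probability tending to $1$) financed by a loss on the complement, use the hypothesis $\liminf w_-(x)u_-(1/x)=0$ to force $V_-\to 0$ along a subsequence, and conclude by showing that no single admissible payoff attains the value $M$ (your final paragraph is precisely the content of Lemma~\ref{lem:Supequpinfty}).

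There is, however, one genuine issue: you invoke \Cref{as:rhoinW}, calling it ``essential'' for controlling the tail expectation, but that assumption is \emph{not} among the hypotheses of the theorem, and the paper's proof does not use it. The paper sidesteps your bookkeeping worry entirely by the following explicit choice. First select $a_n\downarrow 0$ along the $\liminf$ subsequence so that $w_-(a_n)u_-(1/a_n)<1/n$; by continuity of $F_\rho^{\mathbb{P}}$ and $\operatorname{ess\,sup}\rho=+\infty$ pick thresholds $b_n\uparrow+\infty$ with $\mathbb{P}(A_n^c)=a_n$ for $A_n=\{\rho\le b_n\}$. Set the gain to be $c_n=b_n/(2\,\mathbb{Q}(A_n))$ on $A_n$ and let the budget determine the loss level $\ell_n=(b_n-2x_0)/(2\,\mathbb{Q}(A_n^c))$ on $A_n^c$. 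The crucial observation is the elementary lower bound
\[
\mathbb{Q}(A_n^c)=\mathbb{E}_{\mathbb{P}}\!\left[\rho\,\bbOne_{A_n^c}\right]\ \ge\ b_n\,\mathbb{P}(A_n^c)=b_n\,a_n,
\]
which uses only that $\rho>b_n$ on $A_n^c$ (no moment condition on $\rho$). This gives $\ell_n\le (b_n-2x_0)/(2b_n a_n)<1/a_n$ for large $n$, whence $V_-(\ell_n\bbOne_{A_n^c})\le w_-(a_n)u_-(1/a_n)<1/n$, while $c_n\to+\infty$ and $\mathbb{P}(A_n)\to 1$ give $V_+\to M$. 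So the ``delicate bookkeeping'' you anticipated is resolved by this single inequality, and \Cref{as:rhoinW} plays no role.
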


\begin{remark}
	\begin{enumerate}[label=\emph{(\roman*)}]
		\item
		In particular, \Cref{th:NecI} implies that, if $u_{-}\!\left(+\infty\right)<+\infty$ as well, then the optimisation problem is not attainable. Although many authors  argue in favour of such $u_{-}$, see e.g.\ \citet{muravievrogers13}, for the remainder of this section, we shall only consider the case where $u_{-}$ is not bounded.
		
		\item
		Considering the specific case where both $u_{-}$ and $w_{-}$ are power functions, respectively with parameters $\alpha>0$ and $\beta>0$, there is an optimal strategy only if $\alpha\geq\beta$, so we obtain the analogue of Proposition~3.7 in \citet{rasonyi2013}. Moreover, trivial modifications in the proof of \Cref{th:NecI} show that, when $\alpha=\beta$, and thus $\lim_{x\rightarrow 0^{+}} w_{-}\!\left(x\right)u_{-}\!\left(1/x\right)=1$, existence of an optimal portfolio still does not hold.
		
		\item
		Another interesting conclusion which can be drawn from the above result is that, under additional conditions on the growth of $u_{-}$,\footnote{E.g., there exist $\gamma\in\left[\left.0,1\right)\right.$, $C_{1}>0$ and $C_{2}\geq 0$ such that
$u_{-}\!\left(x\right)\leq C_{1}\,x^{\gamma}+C_{2}$ for sufficiently large $x$.} the investor must distort the probability of losses, otherwise there is no optimal portfolio. This complements Theorem~3.2 of \citet{jin2008} (which states that a probability distortion on losses is a necessary condition for the well-posedness of~\eqref{eq:optport} when $u_{+}\!\left(+\infty\right)=+\infty$), but for a bounded utility on gains.
	\end{enumerate}
	\label[obs]{obs:NecI}
\end{remark}

For example, an investor with a logarithmic utility and a Prelec distortion on losses does not admit an optimal trading strategy. Existence of an optimal strategy requires that $w_{-}\!\left(x\right)$ cannot decrease to zero too fast, but must approach zero more slowly than $\left[u_{-}\!\left(1/x\right)\right]^{-1}$, as $x\rightarrow 0^{+}$. Motivated by \Cref{th:NecI}, we introduce the following concept.
\begin{definition}[\textbf{Associated distortion}]\label[def]{def:AssocDist}
	Given a real number $\delta>0$ and a utility function $u_-:\left[\left.0,+\infty\right)\right.\rightarrow \left[\left.0,+\infty\right)\right.$ with $u_-\!\left(+\infty\right)=+\infty$, let us define the function $w_{\delta}:\left[0,1\right]\rightarrow \left[0,1\right]$ in the following way,
	\begin{equation}\label{eq:wdelta}
		w_{\delta}\!\left(x\right)\triangleq u_-^{\delta}(1)\left[u_-\!\left(1/x\right)\right]^{-\delta}\bbOne_{\left.\left(0,1\right.\right]}\!\left(x\right),\qquad x\in\left[0,1\right].
	\end{equation}
	We call $w_{\delta}$ the \emph{distortion associated with $u_-$ with parameter $\delta$}.
\end{definition}

\begin{example}\label[ex]{ex:Prelec}
	Let $\alpha>0$ and $\varpi\in\left(0,1\right)$, and consider $u_{-}:\left[\left.0,+\infty\right)\right.\rightarrow \left[\left.0,+\infty\right)\right.$ given by $u_{-}\!\left(x\right)\triangleq\exp\left\{\alpha\,\mathrm{sgn}\!\left(x-1\right)\left|\log\!\left(x\right)\right|^{\varpi}\right\}\bbOne_{\left(0,+\infty\right)}\!\left(x\right)$ for any $x\geq 0$. %
	Clearly, this utility function satisfies $u_{-}\!\left(+\infty\right)=+\infty$ and, for every $\delta>0$, its associated distortion is the Prelec distortion with parameters $\delta\alpha>0$ and $\varpi\in\left(0,1\right)$.
\end{example}

The following corollary to \Cref{th:NecI} is now immediate and tells us that, in the particular case where the distortion on losses is the distortion associated with $u_{-}$ for some parameter $\delta>0$, a necessary condition for attainability is that $\delta\leq1$.
\begin{corollary}[\textbf{Necessary condition II}]\label[cor]{cor:NecII}
	Let $u_{-}\!\left(+\infty\right)=+\infty$ and $\delta>0$. Suppose that the investor's probability weighting on losses is $w_-=w_{\delta}$. Then, under \Cref{as:continuousCDFrho,as:esssup,as:kindcompl,as:boundedup}, the optimal portfolio problem~\eqref{eq:optport} is attainable only if $\delta\leq1$.\qed
\end{corollary}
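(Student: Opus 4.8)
The plan is to derive \Cref{cor:NecII} directly from \Cref{th:NecI}, so I would first record the elementary computation behind that reduction and then describe how I would prove \Cref{th:NecI}, whose proof is the real content. If the investor's weighting on losses is $w_-=w_{\delta}$ as in \Cref{def:AssocDist}, then for every $x\in(0,1]$ one has $w_-(x)\,u_-(1/x)=u_-^{\delta}(1)\,[u_-(1/x)]^{1-\delta}$, and since $u_-(+\infty)=+\infty$ forces $u_-(1/x)\to+\infty$ as $x\to0^{+}$, the quantity $\liminf_{x\to0^{+}}w_-(x)\,u_-(1/x)$ equals $+\infty$ when $\delta<1$, equals $u_-(1)>0$ when $\delta=1$, and equals $0$ when $\delta>1$. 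Hence the necessary condition of \Cref{th:NecI} holds if and only if $\delta\le1$, which is the assertion of \Cref{cor:NecII}.

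To prove \Cref{th:NecI} I would argue by contraposition: assuming $\liminf_{x\to0^{+}}w_-(x)\,u_-(1/x)=0$, I would show that $V^{*}(x_0)=M$, where $M:=u_+(+\infty)<+\infty$, and that this value is never attained. For the first part, fix a sequence $x_n\downarrow0$ with $w_-(x_n)\,u_-(1/x_n)\to0$. Since $F_{\rho}^{\mathbb{P}}$ is continuous (\Cref{as:continuousCDFrho}) and $\mathrm{ess\,sup}_{\mathbb{P}}\,\rho=+\infty$ (\Cref{as:esssup}), choose thresholds $c_n\uparrow+\infty$ with $\mathbb{P}(\rho>c_n)=x_n$, set $A_n:=\{\rho>c_n\}$ and $q_n:=\mathbb{Q}(A_n)=\mathbb{E}_{\mathbb{P}}[\rho\bbOne_{A_n}]$. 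Because $\rho\in L^{1}(\mathbb{P})$ we have $q_n\to0$, while $q_n\ge c_n x_n$ gives $q_n/x_n\to+\infty$. Now define the bounded, $\sigma(\rho)$-measurable payoff $X_n:=N_n\bbOne_{A_n^{c}}-(1/x_n)\bbOne_{A_n}$ with $N_n:=(x_0+q_n/x_n)/(1-q_n)$, so that $\mathbb{E}_{\mathbb{Q}}[X_n]=x_0$ and $N_n\to+\infty$; by \Cref{as:kindcompl} each $X_n$ is the terminal value of an admissible strategy with initial capital $x_0$. A direct evaluation of the two Choquet integrals gives $V_+(X_n^{+})=u_+(N_n)\,w_+(1-x_n)\to M$ (using $N_n\to+\infty$ and $w_+(1)=1$) and $V_-(X_n^{-})=u_-(1/x_n)\,w_-(x_n)\to0$, hence $V(X_n)\to M$; combined with \Cref{prop:boundeduplus} this yields $V^{*}(x_0)=M$.

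For non-attainability I would show that $V(\Pi_T^{\phi})<M$ for every admissible $\phi$, which forbids an optimiser once $V^{*}(x_0)=M$ is known. Since $u_+$ is strictly increasing with $u_+(+\infty)=M$, necessarily $u_+(y)<M$ for every finite $y\ge0$, so $Z:=u_+\bigl([\Pi_T^{\phi}]^{+}\bigr)<M$ almost surely; hence $\mathbb{P}(Z>y)\downarrow0$ as $y\uparrow M$, and there is $y_0<M$ with $\mathbb{P}(Z>y_0)<1$. The strict monotonicity of $w_+$ (with $w_+(1)=1$) together with $Z\le M$ then gives $V_+\bigl([\Pi_T^{\phi}]^{+}\bigr)=\int_{0}^{M}w_+(\mathbb{P}(Z>y))\,dy<M$, and since $V_-\bigl([\Pi_T^{\phi}]^{-}\bigr)\ge0$ we obtain $V(\Pi_T^{\phi})<M=V^{*}(x_0)$; contraposition completes the proof of \Cref{th:NecI}, and \Cref{cor:NecII} follows from the computation above.

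The main obstacle, I expect, is the first step: meeting the budget constraint $\mathbb{E}_{\mathbb{Q}}[X_n]=x_0$ while forcing $N_n\to+\infty$ (so that $u_+(N_n)\to M$). This is where the market structure is used --- one needs the gap $\mathbb{Q}(A_n)\gg\mathbb{P}(A_n)=x_n$, which comes from $c_n\to+\infty$ (and hence from \Cref{as:esssup,as:continuousCDFrho}) together with the uniform integrability of $\rho$ (so that $\mathbb{Q}(A_n)\to0$); the hypothesis $\liminf_{x\to0^{+}}w_-(x)\,u_-(1/x)=0$ is then exactly what allows the loss contribution $V_-(X_n^{-})=u_-(1/x_n)\,w_-(x_n)$ to be killed along the chosen subsequence while the gain contribution is pushed up to its supremum $M$ at no asymptotic cost. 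The residual checks --- that $X_n$ is genuinely replicable from initial wealth $x_0$, and the one-sided continuity/monotonicity of $w_{\pm}$ and $u_+$ at the endpoints --- are routine.
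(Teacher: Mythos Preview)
Your reduction of \Cref{cor:NecII} to \Cref{th:NecI} via the identity $w_{\delta}(x)\,u_{-}(1/x)=u_{-}^{\delta}(1)\,[u_{-}(1/x)]^{1-\delta}$ is exactly what the paper intends (it marks the corollary as immediate with a \qed), and your proof of \Cref{th:NecI} follows the same strategy as the paper's: construct $\sigma(\rho)$-measurable two-valued payoffs meeting the budget constraint whose gain part is pushed to $M$ while the loss contribution $u_{-}(1/x_n)\,w_{-}(x_n)$ vanishes, then show $M$ is never attained. Your construction is a harmless variant (you fix the loss at $1/x_n$ and solve for the gain $N_n$, whereas the paper fixes the gain at $b_n/(2\mathbb{Q}(A_n))$ and bounds the resulting loss), and your direct argument that $V_{+}([\Pi_T^{\phi}]^{+})<M$ for every admissible $\phi$ is precisely the content of the paper's \Cref{lem:Supequpinfty}, whose proof the paper omits.
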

Therefore, when the parameter $\delta$ is strictly greater than $1$, by the preceding result we know that the supremum in~\eqref{eq:optport} is never attained. The same conclusion also holds with $\delta=1$ for some fairly typical utility functions (see \Cref{obs:NecI} above).

The remainder of this section will be devoted to arguing that the condition $\delta<1$ is not only ``almost necessary'', but also sufficient to ensure that an optimal trading strategy does in fact exist, under an additional hypothesis on $u_{-}$ below.
\begin{assumption}\label[as]{as:existsxi}
	For every $\delta\in\left(0,1\right)$, there is some $\xi>1$ such that
	\begin{equation}
		\lim_{x\rightarrow +\infty} \frac{\left[u_{-}\!\left(x^{\xi}\right)\right]^{\delta}}{u_{-}\!\left(x\right)}=0.
	\end{equation}
\end{assumption}
As an almost reciprocal of \Cref{cor:NecII}, we have the following.
\begin{theorem}[\textbf{Sufficient condition}]\label[th]{th:Suff}
	Suppose $u_{-}$ and $w_{\delta}$ are as in the statement of \Cref{cor:NecII}, and $w_{-}\!\left(x\right)\geq 
w_{\delta}\!\left(x\right)$ for all $x\in [0,1]$. Under \Cref{as:boundedup,as:existsxi,as:continuousCDFrho,as:kindcompl,as:rhoinW}, if $\delta\in\left(0,1\right)$, then there exists an optimal strategy.
\end{theorem}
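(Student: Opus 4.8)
The plan is to reduce the problem to a static, law‑based optimisation, extract a maximising sequence, prove tightness of both its gain and loss parts, and pass to a weak limit, the boundedness of $u_{+}$ guaranteeing the right (semi)continuity of the objective. \textbf{Step 1 (reduction to a monotone function of $\rho$).} Since $V$ depends only on the $\mathbb{P}$‑law of its argument, for any admissible $\phi$ with $V(\Pi_{T}^{\phi})>-\infty$ the Hardy--Littlewood inequality together with the continuity of $F_{\rho}^{\mathbb{P}}$ (\Cref{as:continuousCDFrho}) lets us replace $X=\Pi_{T}^{\phi}$ by $\tilde{X}=q_{X}\bigl(1-F_{\rho}^{\mathbb{P}}(\rho)\bigr)$ (with $q_{X}$ the quantile function of $X$), a non‑increasing Borel function of $\rho$ with the same $\mathbb{P}$‑law, hence the same value, and with $\mathbb{E}_{\mathbb{Q}}[\tilde{X}]\le\mathbb{E}_{\mathbb{Q}}[X]$; adding the non‑negative slack $x_{0}-\mathbb{E}_{\mathbb{Q}}[\tilde{X}]$ can only increase $V$. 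The tail estimate of Step~2 applied to $(\tilde X)^{-}$ (which has finite $V_{-}$ since $w_{-}\geq w_{\delta}$) shows $\tilde{X}\in L^{1}(\mathbb{Q})$, so $\tilde X$ is replicable by \Cref{as:kindcompl}. Thus $V^{*}(x_{0})=\sup\{V(g(\rho))\}$ over non‑increasing $g$ with $\mathbb{E}_{\mathbb{Q}}[g(\rho)]=x_{0}$. Fix a maximising sequence $X_{n}=g_{n}(\rho)$; by \Cref{prop:boundeduplus}, $V^{*}(x_{0})\leq M<\infty$ and $V_{+}(X_{n}^{+})\leq M$, so $\sup_{n}V_{-}(X_{n}^{-})=:K<\infty$.

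\textbf{Step 2 (tightness).} Because $\{u_{-}(X^{-})>y\}\supseteq\{X^{-}>t\}$ for $y\leq u_{-}(t)$ and $w_{-}\geq w_{\delta}$, one has
\[ V_{-}(X^{-})\ \geq\ u_{-}(t)\,w_{-}\bigl(\mathbb{P}(X^{-}>t)\bigr)\ \geq\ u_{-}^{\delta}(1)\,u_{-}(t)\,\bigl[u_{-}\bigl(1/\mathbb{P}(X^{-}>t)\bigr)\bigr]^{-\delta}. \]
Inverting this, $\sup_{n}\mathbb{P}(X_{n}^{-}>t)\leq 1/u_{-}^{-1}\bigl((u_{-}^{\delta}(1)u_{-}(t)/K)^{1/\delta}\bigr)\to 0$ as $t\to\infty$, so $\{X_{n}^{-}\}$ is tight; and since $\delta<1$, \Cref{as:existsxi} (with a suitable $\xi>1$) upgrades this to a power bound $\mathbb{P}(X_{n}^{-}>t)\leq Ct^{-\xi}$ for large $t$, uniformly in $n$. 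As $X_{n}^{-}$ is comonotone with $\rho$ and $\rho$ has all $\mathbb{P}$‑moments (\Cref{as:rhoinW}), a quantile/H\"older estimate gives $\sup_{n}\mathbb{E}_{\mathbb{Q}}[X_{n}^{-}]<\infty$, whence $\sup_{n}\mathbb{E}_{\mathbb{Q}}[X_{n}^{+}]=x_{0}+\sup_{n}\mathbb{E}_{\mathbb{Q}}[X_{n}^{-}]=:L<\infty$. Finally, splitting $\{X_{n}^{+}>t\}$ according to whether $\rho$ exceeds $t^{-1/2}$ and using $\mathbb{E}_{\mathbb{P}}[\rho X_{n}^{+}]\leq L$ together with the $\mathbb{P}$‑moments of $1/\rho$ yields $\sup_{n}\mathbb{P}(X_{n}^{+}>t)\to 0$; hence $\{X_{n}^{+}\}$, and therefore $\{X_{n}\}$, is tight.

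\textbf{Step 3 (passage to the limit).} Along a subsequence $\mathcal{L}(X_{n})\rightharpoonup\mu$; set $X^{*}=q_{\mu}\bigl(1-F_{\rho}^{\mathbb{P}}(\rho)\bigr)$, a non‑increasing function of $\rho$ with law $\mu$. Since $u_{+}(+\infty)=M<\infty$, gains can be truncated almost for free: $V_{+}(X^{+})-V_{+}(X^{+}\wedge N)\leq M-u_{+}(N)\to 0$ uniformly in $X$, while $X\mapsto X\wedge N$ does not increase $\mathbb{E}_{\mathbb{Q}}$. For fixed $N$, $\rho(X_{n}\wedge N)^{+}\leq N\rho\in L^{1}(\mathbb{P})$ is uniformly integrable, so $\mathbb{E}_{\mathbb{Q}}[(X^{*}\wedge N)^{+}]=\lim_{n}\mathbb{E}_{\mathbb{Q}}[(X_{n}\wedge N)^{+}]$ on the subsequence; combined with the quantile‑Fatou bound $\mathbb{E}_{\mathbb{Q}}[(X^{*})^{-}]\leq\liminf_{n}\mathbb{E}_{\mathbb{Q}}[X_{n}^{-}]$ and $\mathbb{E}_{\mathbb{Q}}[X_{n}\wedge N]\leq x_{0}$, a diagonal argument in $N$ produces $X^{*}$ with $\mathbb{E}_{\mathbb{Q}}[X^{*}]\leq x_{0}$, and adding the slack restores equality. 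For the objective, $V_{+}$ is continuous along weakly convergent laws (dominated convergence on $[0,M]$, using continuity of $w_{+}$), so $V_{+}((X^{*})^{+})=\lim_{n}V_{+}(X_{n}^{+})$ once the truncation error (which tends to $0$) is removed, while $V_{-}$ is lower semicontinuous along weakly convergent laws (Fatou for the Choquet integral), so $V_{-}((X^{*})^{-})\leq\liminf_{n}V_{-}(X_{n}^{-})$. Hence $V(X^{*})\geq\limsup_{n}V(X_{n})=V^{*}(x_{0})$; as $X^{*}\in L^{1}(\mathbb{Q})$ is $\sigma(\rho)$‑measurable it is replicable by \Cref{as:kindcompl}, and its replicating strategy is optimal.

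\textbf{Main obstacle.} The crux is Step~2: converting the soft bound furnished by $w_{-}\geq w_{\delta}$ into a genuine power tail $\mathbb{P}(X_{n}^{-}>t)\leq Ct^{-\xi}$ with $\xi>1$. This is precisely where \Cref{as:existsxi} does its work, since $u_{-}$ may grow very slowly and it is only the exponent $1/\delta>1$, amplified through \Cref{as:existsxi}, that yields decay fast enough to be integrated against the (super‑polynomially tailed but all‑moments) density $\rho$; one then has to bootstrap, through the budget identity, to tightness of the gains using the $\mathbb{P}$‑moments of $1/\rho$. In Step~3 the delicate point is the bookkeeping that reconciles the gains truncation (error $M-u_{+}(N)$) with the uniform integrability needed to carry the budget constraint to the limit, together with verifying continuity of $V_{+}$ and lower semicontinuity of $V_{-}$ under weak convergence.
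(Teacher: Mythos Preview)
Your overall architecture coincides with the paper's: maximising sequence, a tail/moment bound on $X_{n}^{-}$ derived from $w_{-}\geq w_{\delta}$ together with \Cref{as:existsxi}, tightness, weak limit, anti-comonotone rearrangement with $\rho$, and (semi)continuity of $V_{\pm}$ under weak convergence. Doing the rearrangement already in Step~1 (rather than only for the limit, as the paper does) is a harmless reordering, and your Step~2 is \Cref{lem:EXeta} in disguise: the Chebyshev-type inequality plus \Cref{lem:zetaG} yield a uniform power tail $\mathbb{P}\{X_{n}^{-}>t\}\leq C\,t^{-\zeta}$ for some $\zeta>1$, equivalent to the paper's bound $\sup_{n}\mathbb{E}_{\mathbb{P}}[(X_{n}^{-})^{\eta}]<\infty$ for $\eta\in(1,\zeta)$.

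There is, however, a genuine gap in Step~3 at the budget constraint. From $\mathbb{E}_{\mathbb{Q}}[X_{n}\wedge N]\leq x_{0}$ and your dominated-convergence identity for the truncated gains you obtain
\[
\mathbb{E}_{\mathbb{Q}}\bigl[(X^{*}\wedge N)^{+}\bigr]\ \leq\ x_{0}+\liminf_{n}\mathbb{E}_{\mathbb{Q}}\bigl[X_{n}^{-}\bigr].
\]
To conclude $\mathbb{E}_{\mathbb{Q}}[X^{*}\wedge N]\leq x_{0}$ you would need $\liminf_{n}\mathbb{E}_{\mathbb{Q}}[X_{n}^{-}]\leq\mathbb{E}_{\mathbb{Q}}[(X^{*})^{-}]$, i.e.\ the \emph{reverse} of the Fatou inequality you invoke; the bound $\mathbb{E}_{\mathbb{Q}}[(X^{*})^{-}]\leq\liminf_{n}\mathbb{E}_{\mathbb{Q}}[X_{n}^{-}]$ points the wrong way, and no diagonal argument in $N$ repairs this. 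What is actually required is uniform integrability of $\{\rho X_{n}^{-}\}$ (hence \emph{convergence} of $\mathbb{E}_{\mathbb{Q}}[X_{n}^{-}]$), and this is precisely where the $\eta$-moment bound of Step~2 must be spent a second time: with $\rho\in\mathscr{W}$ and $\sup_{n}\mathbb{E}_{\mathbb{P}}[(X_{n}^{-})^{\eta}]<\infty$ for some $\eta>1$, H\"older gives $\sup_{n}\mathbb{E}_{\mathbb{P}}[(\rho X_{n}^{-})^{\eta'}]<\infty$ for some $\eta'>1$, so de~la~Vall\'ee--Poussin yields UI and $\mathbb{E}_{\mathbb{Q}}[X_{n}^{-}]\to\mathbb{E}_{\mathbb{Q}}[(X^{*})^{-}]$ along the subsequence (via a.e.\ convergence of quantiles). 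This is exactly how the paper closes the argument, and why it carries the full $\eta$-moment bound, not merely tightness of $X_{n}^{-}$, all the way to the end. Once you replace ``Fatou for losses'' by ``UI for losses'', your Step~3 goes through; the truncation in $N$ then becomes unnecessary, since Fatou on $(X^{*})^{+}$ already provides the needed upper bound for $\mathbb{E}_{\mathbb{Q}}[(X^{*})^{+}]$.
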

Hence, \Cref{cor:NecII} and \Cref{th:Suff} show that $\left[u_{-}\!\left(1/x\right)\right]^{-1}$ can be regarded as the threshold for the distortion function as far as the existence of an optimal portfolio is concerned. Below this, in the sense of $\delta<1$, attainability holds. Above this, when $\delta>1$ (or, for some cases, also  when $\delta=1$), it does not. Finally, we present a result which allows us to associate \Cref{as:existsxi} to the renowned concept of \emph{asymptotic elasticity} (first introduced in the financial mathematics literature by \citet{cvitanic96} and \citet{kramkov99}).
\begin{lemma}\label[lem]{lem:AEzfinite}
	Suppose $u_{-}\!\left(+\infty\right)=+\infty$, and let $z_{-}:\left[\left.0,+\infty\right)\right.\rightarrow \left[\left.0,+\infty\right)\right.$ be the transform of $u_{-}$ given by $z_{-}\!\left(x\right)\triangleq \log\!\left(u_{-}\!\left(e^{x}\right)\right)$, for all $x\geq0$. %
	%
	If there exist $\gamma>0$ and $\underline{x}> 0$ such that
	\begin{equation}\label{eq:AEz}
		z_{-}\!\left(\lambda x\right)\leq \lambda^{\gamma} z_{-}\!\left(x\right)\qquad \text{for all } \lambda\geq1 \text{ and } x\geq \underline{x},
	\end{equation}
	then \Cref{as:existsxi} is satisfied.
\end{lemma}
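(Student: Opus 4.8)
The plan is to pass to the transform $z_-$, in which the exponent hidden in \Cref{as:existsxi} becomes \emph{linear} in $z_-$, and then to exploit the polynomial-type bound~\eqref{eq:AEz}. First I would rewrite the quantity appearing in \Cref{as:existsxi}: for $x\geq 1$ and any $\xi>1$ one has $u_-(x)=e^{z_-(\log x)}$ and $u_-(x^{\xi})=e^{z_-(\xi\log x)}$, so, with the substitution $t=\log x$,
\[
\frac{\left[u_-\!\left(x^{\xi}\right)\right]^{\delta}}{u_-\!\left(x\right)}=\exp\!\left(\delta\,z_-(\xi t)-z_-(t)\right),
\]
and it suffices to exhibit some $\xi>1$ for which $\delta\,z_-(\xi t)-z_-(t)\to-\infty$ as $t\to+\infty$.

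Next I would record the two ingredients. Since $u_-(+\infty)=+\infty$, we have $z_-(t)=\log(u_-(e^{t}))\to+\infty$ as $t\to+\infty$; in particular $z_-(t)>0$ for all $t$ large enough. Moreover, fixing $\delta\in(0,1)$, the number $\delta^{-1/\gamma}$ is strictly larger than $1$ (as $\gamma>0$ and $0<\delta<1$), so I may choose $\xi\in\bigl(1,\delta^{-1/\gamma}\bigr)$; then the constant $c\triangleq\delta\,\xi^{\gamma}-1$ is strictly negative.

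With this $\xi$, applying~\eqref{eq:AEz} with $\lambda=\xi$ gives, for every $t\geq\underline{x}$,
\[
\delta\,z_-(\xi t)-z_-(t)\leq\delta\,\xi^{\gamma}z_-(t)-z_-(t)=c\,z_-(t).
\]
Because $c<0$ and $z_-(t)\to+\infty$, the right-hand side tends to $-\infty$, hence so does $\delta\,z_-(\xi t)-z_-(t)$, and therefore the displayed ratio converges to $0$. Since $\delta\in(0,1)$ was arbitrary, \Cref{as:existsxi} holds.

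I do not anticipate a genuine obstacle; the argument is essentially a change of variables followed by a single application of the hypothesis. The only points deserving a little care are technical bookkeeping: one must restrict to $x\geq 1$ (equivalently $t\geq 0$) so that $z_-(\xi\log x)$ is legitimate and $x^{\xi}\geq 1$, and one should let $t\to+\infty$ only along those $t$ large enough that simultaneously $t\geq\underline{x}$ and $z_-(t)>0$, so that multiplying the bound~\eqref{eq:AEz} through by the negative constant $c$ indeed drives the exponent to $-\infty$ rather than in the wrong direction.
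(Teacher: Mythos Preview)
Your proof is correct and follows essentially the same route as the paper: choose $\xi\in(1,\delta^{-1/\gamma})$ so that $\delta\xi^{\gamma}<1$, apply~\eqref{eq:AEz} with $\lambda=\xi$, and use $z_-(t)\to+\infty$. The only cosmetic difference is that the paper phrases the conclusion as $z_-(t)-\delta z_-(\varsigma t)\to+\infty$ and then invokes \Cref{lem:zetaG}\emph{(iii)} to translate this back to \Cref{as:existsxi}, whereas you carry out that change of variables explicitly at the outset.
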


\begin{remark}\label[obs]{obs:AEz}
Suppose further that the function $z_{-}$ is continuously differentiable on $\left(x_{0},+\infty\right)$, for some $x_{0}\geq 0$. It can be easily verified that, in this case, condition~\eqref{eq:AEz} is equivalent to
	\begin{equation*}
		AE_{+}\!\left(z_{-}\right)\triangleq \limsup_{x\rightarrow+\infty}\frac{x\,\left(z_{-}\right)'\!\left(x\right)}{z_{-}\!\left(x\right)}<+\infty,
	\end{equation*}
where $AE_{+}\!\left(z_{-}\right)$ is the asymptotic elasticity of $z_{-}$ at $+\infty$. We refer to Lemma~6.3 in \citet{kramkov99}, while drawing attention to the fact that the proof there only uses the continuity, the monotonicity and the continuous differentiability of $z_{-}$, not its concavity.
\end{remark}

\begin{example}
	\begin{enumerate}[label=\emph{(\roman*)}]
		\item
		Suppose $u_{-}$ is continuously differentiable and $AE_{+}\!\left(u_{-}\right)<+\infty$. If, in addition, there exist constants $C>0$, $\gamma>0$ so that $u_{-}\!\left(x\right)\geq C\,x^{\gamma}$ holds true for all $x$ sufficiently large, then $u_{-}$ satisfies \Cref{as:existsxi}. Indeed,
		\begin{equation*}
			\frac{x\,\left(z_{-}\right)'\!\left(x\right)}{z_{-}\!\left(x\right)}\leq \frac{x\,\left(z_{-}\right)'\!\left(x\right)}{\log\!\left(C\right)+\gamma x}=\frac{\left(z_{-}\right)'\!\left(x\right)}{\left(\log\!\left(C\right)/x\right)+\gamma}
		\end{equation*}
		for every sufficiently large $x$, thus $AE_{+}\!\left(z_{-}\right)\leq \frac{1}{\gamma}\limsup_{x\rightarrow+\infty} \left(z_{-}\right)'\!\left(x\right)$. But, as noted in \citet[p.~946]{kramkov99}, it is trivial to check that $\limsup_{x\rightarrow+\infty} \left(z_{-}\right)'\!\left(x\right)=AE_{+}\!\left(u_{-}\right)$, which is finite by hypothesis, hence \Cref{lem:AEzfinite} gives us the claimed result.
		
		In particular, this implies that the power utility function with parameter $\alpha>0$ (not necessarily less than one), having asymptotic elasticity equal to $\alpha$, verifies \Cref{as:existsxi}.
		
		\item
		Let $u_{1}$ be the utility of \Cref{ex:Prelec} with parameters $\alpha>0$ and $\varpi\in\left(0,1\right)$, $u_{2}$ the logarithmic utility, and $u_{3}$ the \emph{log-log utility} defined as $u_{3}\!\left(x\right)\triangleq \log\!\left(1+\log\!\left(1+x\right)\right)$ for all $x\geq 0$. Their transforms, $z_{1}$, $z_{2}$ and $z_{3}$, respectively, equal
		\begin{align*}
			z_{1}\!\left(x\right) &=\alpha\,x^{\varpi},\\
			z_{2}\!\left(x\right) &=\log\!\left(\log\!\left(1+e^{x}\right)\right),\\
			z_{3}\!\left(x\right) &=\log\!\left(\log\!\left(1+\log\!\left(1+e^{x}\right)\right)\right),
		\end{align*}
		for all $x\geq 0$.	It can be checked that these functions are strictly concave, hence $AE_{+}\!\left(z_{i}\right)\leq 1$ for all $i\in\left\{1,2,3\right\}$ (see, e.g., \citet[Lem\-ma~6.1]{kramkov99}).

		\item
		Assume $u_{-}\!\left(+\infty\right)=+\infty$, and also that $\left(u_{-}\right)'$ exists and tends to $0$ fast enough as $x\rightarrow+\infty$, i.e., $\left(u_{-}\right)'\!\left(x\right)\leq C/\left[x\,\log\!\left(x\right)\right]$ for some $C>0$ and for $x$ large enough. Then \Cref{as:existsxi} is fulfilled. Indeed,
		\begin{equation*}
			\frac{x\,\left(z_{-}\right)'\!\left(x\right)}{z_{-}\!\left(x\right)}=\frac{x\,e^{x}\left(u_{-}\right)'\!\left(e^{x}\right)}{u_{-}\!\left(e^{x}\right)\log\!\left(u_{-}\!\left(e^{x}\right)\right)}\leq\frac{C}{u_{-}\!\left(e^{x}\right)\log\!\left(u_{-}\!\left(e^{x}\right)\right)}\xrightarrow[x\rightarrow +\infty]{} 0.
		\end{equation*}
\end{enumerate}
\end{example}

\section{Conclusions and Further Work}\label[sec]{sec:Conclusion}

In this work, we analysed the CPT optimal portfolio problem in a continuous-time complete financial market. We focused solely on the case where the investor's utility on gains is bounded above and we found a necessary condition for the existence of an optimal solution. As expected, the obtained condition involves both the utility and the distortion on losses, whereas gains do not matter. A sufficient condition for attainability was derived too, showing that our necessary condition forms the threshold for existence.

With regard to our \Cref{as:existsxi}, which may appear to be somewhat artificial at first, it was shown to be related to such widely known a concept as asymptotic elasticity. Moreover, it is satisfied by a large class of functions, including some of the most popular ones in the literature. Extending these results for unbounded $u_{+}$ is the object of further research.

\appendix
\section{Proofs and Auxiliary Results}\label[app]{app:Proofs}

We may and will assume that $u_{-}\!\left(1\right)=1$. Indeed, let $y>0$ be the (unique) value such that $u_{-}\!\left(y\right)=1$. Define $\bar{u}_{\pm}\!\left(x\right)\triangleq u_{\pm}\!\left(xy\right)$. Notice that \Cref{as:boundedup,as:existsxi} continue to hold for $\bar{u}_{\pm}$ and that $V^{*}_{u_{-}}\!\left(x_{0}\right)=V^{*}_{\bar{u}_{-}}(x_{0}/y)$, so all the results below extend from 
the case $u_{-}\!\left(1\right)=1$ to the general case.

\begin{lemma}\label[lem]{lem:Supequpinfty}
	Under \Cref{as:boundedup}, there exists an optimal portfolio for problem~\eqref{eq:optport} only if
	\begin{equation}\label{eq:supequpinfty}
		\sup\left\{V\!\left(\Pi_{T}^{\phi}\right)\text{: }\phi \in \Psi\!\left(x_{0}\right)\right\} <u_{+}\!\left(+\infty\right).
	\end{equation}
\end{lemma}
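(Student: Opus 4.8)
The plan is to prove the contrapositive. By \Cref{prop:boundeduplus} we already know $V^*(x_0)\le M:=u_+(+\infty)$, so it suffices to show that the existence of an optimal $\phi^*\in\Psi(x_0)$ is incompatible with $V^*(x_0)=M$. In fact I will prove the sharper pointwise statement $V(\Pi_T^{\phi})<M$ for \emph{every} $\phi\in\Psi(x_0)$, whence an optimiser would force $\sup_{\phi}V(\Pi_T^{\phi})=V(\Pi_T^{\phi^*})<M$. Since $V_-([\Pi_T^{\phi}]^-)\ge 0$, we have $V(\Pi_T^{\phi})\le V_+([\Pi_T^{\phi}]^+)$, so everything reduces to showing $\int_0^{\infty}w_+\big(\mathbb{P}\{Y>y\}\big)\,dy<M$ for the nonnegative random variable $Y:=u_+([\Pi_T^{\phi}]^+)$.

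The first step is the elementary observation that $[\Pi_T^{\phi}]^+$ is $\mathbb{P}$-a.s.\ finite (it is a real-valued random variable), while $u_+$ is strictly increasing with $\sup u_+=u_+(+\infty)=M$; hence $Y<M$ $\mathbb{P}$-a.s. Therefore the survival function $G(y):=\mathbb{P}\{Y>y\}$ vanishes identically on $[M,\infty)$, and using $w_+\le w_+(1)=1$ we already get $V_+([\Pi_T^{\phi}]^+)=\int_0^M w_+(G(y))\,dy\le M$.

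The second step is to exclude equality. If that last integral were equal to $M$, then (since its integrand is $\le 1$) $w_+(G(y))=1$, hence $G(y)=1$ by strict monotonicity of $w_+$ together with $w_+(1)=1$, for Lebesgue-a.e.\ $y\in[0,M]$; as $G$ is nonincreasing, this upgrades to $G(y)=1$ for \emph{all} $y\in[0,M)$, and letting $y\uparrow M$ yields $\mathbb{P}\{Y\ge M\}=1$, contradicting $Y<M$ a.s. This contradiction gives the strict inequality and hence the lemma.

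I do not anticipate a genuine obstacle: the argument is short and invokes \Cref{as:boundedup} only via the fact that a utility bounded by $M$, evaluated at an a.s.-finite payoff, stays strictly below $M$. The only point worth a word of care is the passage from ``$G=1$ a.e.\ on $[0,M]$'' to ``$G=1$ on all of $[0,M)$'', which is immediate from the monotonicity of the survival function $G$.
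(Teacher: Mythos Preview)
Your argument is correct. The paper omits its own proof of this lemma, so there is nothing to compare against, but your reasoning is sound: for any admissible $\phi$, the random variable $Y=u_{+}\!\left(\left[\Pi_{T}^{\phi}\right]^{+}\right)$ satisfies $Y<M$ a.s.\ (since $u_{+}$ is strictly increasing with supremum $M$ and $\Pi_{T}^{\phi}$ is real-valued), so $\int_{0}^{M}w_{+}\!\left(\mathbb{P}\{Y>y\}\right)dy=M$ would force $\mathbb{P}\{Y>y\}=1$ for all $y<M$ and hence $Y\geq M$ a.s., a contradiction; thus $V(\Pi_{T}^{\phi})\leq V_{+}\!\left(\left[\Pi_{T}^{\phi}\right]^{+}\right)<M$, and if an optimiser $\phi^{*}$ exists the supremum equals $V(\Pi_{T}^{\phi^{*}})<M$. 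The only step requiring a moment's thought---upgrading $G=1$ a.e.\ to $G\equiv 1$ on $[0,M)$ via monotonicity---you have handled explicitly.
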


\begin{proof}
Omitted.
\end{proof}

\begin{proof}[Proof of \Cref{th:NecI}]
	The proof is by contraposition. Let us suppose that we have $\liminf_{x\rightarrow 0^{+}} w_{-}\!\left(x\right) u_{-}\!\left(1/x\right)=0$. %
	Then, using \Cref{as:continuousCDFrho,as:esssup}, it is possible to find two sequences of strictly positive real numbers $\left\{a_{n}\text{; }n\in\mathbb{N}\right\}$ and $\left\{b_{n}\text{; }n\in\mathbb{N}\right\}$, respectively strictly decreasing and strictly increasing, with $\lim_{n\rightarrow +\infty} a_{n}=0$ and $\lim_{n\rightarrow +\infty} b_{n}=+\infty$, whose terms satisfy both $\mathbb{P}\!\left\{\rho\leq b_{n}\right\}=1-a_{n}$ and $w_{-}\!\left(a_{n}\right)u_{-}\!\left(1/a_{n}\right)<1/n$.
		
	Now, for every $n\in\mathbb{N}$, we define the event $A_{n}\triangleq \left\{\rho\leq b_{n}\right\}$, as well as the positive and $\sigma\!\left(\rho\right)$-measurable random variable $X_{n}\triangleq \frac{b_{n}}{2\,\mathbb{Q}\!\left(A_{n}\right)}\bbOne_{A_{n}}$.	It is straightforward to see that $\lim_{n\rightarrow+\infty}\mathbb{Q}\!\left(A_{n}\right)=\lim_{n\rightarrow+\infty}\mathbb{P}\!\left(A_{n}\right)=1$ and so
	\begin{equation*}
		V_{+}\!\left(X_{n}\right)=u_{+}\!\left(\frac{b_{n}}{2\,\mathbb{Q}\!\left(A_{n}\right)}\right)w_{+}\!\left(\mathbb{P}\!\left(A_{n}\right)\right)\xrightarrow[n\rightarrow +\infty]{} u_{+}\!\left(+\infty\right).
	\end{equation*}
	
	Next, let $Y_{n}\triangleq \frac{b_{n}-2x_{0}}{2\,\mathbb{Q}\!\left(A_{n}^{c}\right)}\bbOne_{A_{n}^{c}}$ (note that $\mathbb{Q}\!\left(A_{n}^{c}\right)>0$ for all $n\in\mathbb{N}$), which is also $\sigma\!\left(\rho\right)$-measurable. Since $\lim_{n\rightarrow+\infty}b_{n}=+\infty$, there is an integer $n_{0}$ such that $b_{n}>2x_{0}$ for any $n\geq n_{0}$. Furthermore, given that $\lim_{n\rightarrow+\infty} \frac{b_{n}-2x_{0}}{2b_{n}}=1/2$, there must be some $n_{1}\in\mathbb{N}$ so that $\frac{b_{n}-2x_{0}}{2b_{n}}<1$ for all $n\geq n_{1}$. Combining these facts %
	with the inequality $\mathbb{Q}\!\left(A_{n}^{c}\right)=\mathbb{E}_{\mathbb{Q}}\!\left[\rho\bbOne_{A_{n}^{c}}\right]\geq b_{n}\,\mathbb{P}\!\left(A_{n}^{c}\right)$ and %
	with the monotonicity of $u_{-}$ yields, for every $n\geq \max\left\{n_{0},n_{1}\right\}$,
	\begin{align*}
		V_{-}\!\left({Y_{n}}\right)= u_{-}\!\left(\frac{b_{n}-2x_{0}}{2\,\mathbb{Q}\!\left(A_{n}^{c}\right)}\right)w_{-}\!\left(\mathbb{P}\!\left(A_{n}^{c}\right)\right)
		\leq u_{-}\!\left(\frac{1}{\mathbb{P}\!\left(A_{n}^{c}\right)}\right)w_{-}\!\left(\mathbb{P}\!\left(A_{n}^{c}\right)\right)<\frac{1}{n}.
	\end{align*}
	
	Hence, setting $Z_{n}=X_{n}-Y_{n}$, $n\in\mathbb{N}$, it is obvious that $Z_{n}$ is $\sigma\!\left(\rho\right)$-measurable, and also that $\mathbb{E}_{\mathbb{Q}}\!\left[Z_{n}\right]=x_{0}$ by construction. Besides, for every $n\geq n_{0}$, we have $V_{-}\!\left(Z_{n}^{-}\right)=V_{-}\!\left(Y_{n}\right)<+\infty$ and $\mathbb{E}_{\mathbb{Q}}\!\left[\left|Z_{n}\right|\right]=b_{n}-x_{0}<+\infty$, therefore $Z_{n}$ is replicable from initial capital $x_0$. Finally, we get that $\liminf_{n\rightarrow +\infty}V\!\left(Z_{n}\right)\geq u_{+}\!\left(+\infty\right)-0$, %
	so by \Cref{lem:Supequpinfty} we can conclude.
\end{proof}

\begin{lemma}\label[lem]{lem:zetaG}
	The following three statements are equivalent,
	\begin{enumerate}[label=\emph{(\roman*)}]
		\item
		\Cref{as:existsxi} holds true,
		
		\item
		For each $\delta\in\left(0,1\right)$, there exist a real number $\zeta>1$ and a decreasing function $G:\left(0,+\infty\right)\rightarrow\left[\left.1,+\infty\right)\right.$ such that, for every $\lambda>0$,
		\begin{equation}\label{eq:zetaG}
			u_{-}\!\left(x^{\zeta}\right)\leq \left[\lambda\,u_{-}\!\left(x\right)\right]^{1/\delta},
		\end{equation}
		for all $x\geq G\!\left(\lambda\right)$, and
		
		\item
		For every $\delta\in\left(0,1\right)$, there is $\varsigma>1$ such that $\lim_{x\rightarrow+\infty} \left[z_{-}\!\left(x\right)-\delta\,z_{-}\!\left(\varsigma x\right)\right]=+\infty$, %
		where $z_{-}$ is the transform of $u_{-}$ defined in \Cref{lem:AEzfinite}.
	\end{enumerate}
\end{lemma}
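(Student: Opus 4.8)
The plan is to prove the two equivalences $(\mathrm{i})\Leftrightarrow(\mathrm{iii})$ and $(\mathrm{i})\Leftrightarrow(\mathrm{ii})$ separately, since both become elementary once one spots the right reformulation: the first reduces to a logarithmic change of variables, and the second to unwinding the ``for every $\lambda>0$'' quantifier in~\eqref{eq:zetaG}. Throughout I would fix $\delta\in(0,1)$ and use that, since $u_{-}$ is continuous, strictly increasing and $u_{-}(+\infty)=+\infty$, all the ratios below are well-defined and, for large arguments, strictly positive, so logarithms may be taken freely. As $\delta$ is universally quantified in each of (i), (ii), (iii), it suffices to prove the three per-$\delta$ statements equivalent.

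For $(\mathrm{i})\Leftrightarrow(\mathrm{iii})$ I would argue as follows. Taking logarithms, the condition $[u_{-}(x^{\xi})]^{\delta}/u_{-}(x)\to 0$ of \Cref{as:existsxi} is equivalent to $\delta\log u_{-}(x^{\xi})-\log u_{-}(x)\to-\infty$, that is, to $\log u_{-}(x)-\delta\log u_{-}(x^{\xi})\to+\infty$ as $x\to+\infty$. Substituting $x=e^{s}$ and recalling $z_{-}(s)=\log u_{-}(e^{s})$, the latter reads $z_{-}(s)-\delta\,z_{-}(\xi s)\to+\infty$ as $s\to+\infty$. Hence (i) holds with a given $\xi>1$ precisely when (iii) holds with $\varsigma=\xi$, and conversely; this settles $(\mathrm{i})\Leftrightarrow(\mathrm{iii})$.

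For $(\mathrm{ii})\Rightarrow(\mathrm{i})$ I would take $\zeta,G$ as in (ii) and raise~\eqref{eq:zetaG} to the power $\delta$, obtaining $[u_{-}(x^{\zeta})]^{\delta}\le\lambda\,u_{-}(x)$ for all $x\ge G(\lambda)$; thus $\limsup_{x\to+\infty}[u_{-}(x^{\zeta})]^{\delta}/u_{-}(x)\le\lambda$ for every $\lambda>0$, and nonnegativity of the ratio forces its limit to be $0$, i.e.\ (i) with $\xi=\zeta$. For the converse $(\mathrm{i})\Rightarrow(\mathrm{ii})$ I would set $\zeta:=\xi$ and consider $f(x):=[u_{-}(x^{\zeta})]^{\delta}/u_{-}(x)$, which is continuous on $(0,+\infty)$ and, by (i), tends to $0$ at $+\infty$. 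For $\lambda>0$ I would define $G(\lambda):=\max\{1,\sup\{x>0\colon f(x)>\lambda\}\}$ (with the convention $\sup\emptyset:=0$): since $f\to 0$, the set $\{f>\lambda\}$ is bounded, so $G(\lambda)<+\infty$, and since $f$ is continuous that set is open, which forces $f(x)\le\lambda$ for all $x\ge G(\lambda)$. The resulting $G$ is non-increasing and $[1,+\infty)$-valued, and the inequality $f(x)\le\lambda$ is exactly~\eqref{eq:zetaG}, so (ii) holds.

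The argument is short; the single point that calls for a little care is the construction of the decreasing function $G$ in $(\mathrm{i})\Rightarrow(\mathrm{ii})$, where I would lean on the continuity of $u_{-}$ (hence of $f$) to ensure that the supremum defining $G(\lambda)$ is not itself a value of $x$ at which the required inequality fails. Beyond that bookkeeping I do not foresee any genuine obstacle.
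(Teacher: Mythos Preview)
Your proposal is correct and follows essentially the same route as the paper: the equivalence $(\mathrm{i})\Leftrightarrow(\mathrm{iii})$ is handled by the logarithmic substitution $x=e^{s}$, the implication $(\mathrm{ii})\Rightarrow(\mathrm{i})$ is the trivial direction, and for $(\mathrm{i})\Rightarrow(\mathrm{ii})$ one constructs $G(\lambda)$ as the threshold beyond which $[u_{-}(x^{\zeta})]^{\delta}\le\lambda\,u_{-}(x)$ holds. The only cosmetic difference is that the paper defines $G(\lambda)$ as the infimum of admissible thresholds $L\ge 1$, whereas you take the supremum of the ``bad'' set $\{f>\lambda\}$ and invoke openness via continuity of $f$; both yield a non-increasing $[1,+\infty)$-valued function with the required property.
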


\begin{proof}
	\emph{(i)}$\,\Rightarrow\,$\emph{(ii)} is trivial, so we prove the reverse implication. Let $\delta\in\left(0,1\right)$ be fixed, and consider $\lambda>0$ arbitrary. Since, by hypothesis, $\lim_{x\rightarrow +\infty} \frac{\left[u_{-}\!\left(x^{\xi}\right)\right]^{\delta}}{u_{-}\!\left(x\right)}=0$, there exists some $L\triangleq L\!\left(\lambda\right)\geq 1$ such that $u_{-}\!\left(x^{\xi}\right)<\left[\lambda\,u_{-}\!\left(x\right)\right]^{1/\delta}$ for all $x\geq L$. Next define, for each $\lambda>0$, the nonempty set
	\begin{equation*}
		\mathscr{S}_{\lambda}\triangleq \left\{L\geq 1\text{: } u_{-}\!\left(x^{\xi}\right)<\left[\lambda\,u_{-}\!\left(x\right)\right]^{1/\delta} \text{ for all } x\geq L\right\},
	\end{equation*}
	which is bounded below by $1$, so it admits an infimum. Then let $G:\left(0,+\infty\right)\rightarrow \mathbb{R}$ be the function given by $G\!\left(\lambda\right)\triangleq \inf \mathscr{S}_{\lambda}$, for any $\lambda>0$. Clearly, by construction, $G\geq 1$. Furthermore, it can be easily checked that, for every $\lambda>0$ and for all $x\geq G\!\left(\lambda\right)$, the inequality $u_{-}\!\left(x^{\xi}\right)\leq \left[\lambda\,u_{-}\!\left(x\right)\right]^{1/\delta}$ holds true. Finally, it remains to show that $G$ is indeed a decreasing function of $\lambda$. To see this, let $0<\lambda_{1}\leq \lambda_{2}$. Then, for all $x\geq G\!\left(\lambda_{1}\right)\geq 1$, we have $u_{-}\!\left(x^{a}\right)\leq \left[\lambda_{1} u_{-}\!\left(x\right)\right]^{1/\delta}\leq \left[\lambda_{2} u_{-}\!\left(x\right)\right]^{1/\delta}$, hence $G\!\left(\lambda_{1}\right)$ belongs to $\mathscr{S}_{\lambda_{2}}$. Consequently, we must have, by the definition of the infimum, that $G\!\left(\lambda_{1}\right)\geq G\!\left(\lambda_{2}\right)$.
	
	The proof of \emph{(i)}$\,\Leftrightarrow\,$\emph{(iii)} is straightforward.
\end{proof}

\begin{lemma}\label[lem]{lem:wPfX}
	Suppose $u_{-}\!\left(+\infty\right)=+\infty$, and let $f:\left[\left.0,+\infty\right)\right.\rightarrow \left[\left.0,+\infty\right)\right.$ be a continuous, strictly increasing function satisfying both $f\!\left(0\right)=0$ and $f\!\left(+\infty\right)=+\infty$. Then
	\begin{equation}\label{eq:wPfX}
		w_{-}\!\left(\mathbb{P}\!\left\{f\!\left(X\right)>t\right\}\right)\leq \frac{1}{u_{-}\!\left(f^{-1}\!\left(t\right)\right)}\int_{0}^{+\infty} w_{-}\!\left(\mathbb{P}\!\left\{u_{-}\!\left(X\right)>y\right\}\right)\,dy
	\end{equation}
	for any $t>0$ and for any positive random variable $X$.
\end{lemma}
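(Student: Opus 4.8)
The plan is to recognise inequality~\eqref{eq:wPfX} as a Chebyshev/Markov-type estimate: the left-hand side is (a distortion of) a tail probability, and the integral on the right is large precisely because its integrand dominates that same tail over a whole interval of $y$-values.

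First I would fix $t>0$ and set $s\triangleq f^{-1}\!\left(t\right)>0$, which is well defined because $f$ is a continuous, strictly increasing bijection of $\left[\left.0,+\infty\right)\right.$ onto itself. By strict monotonicity of $f$ one has $\left\{f\!\left(X\right)>t\right\}=\left\{X>s\right\}$, so the left-hand side of~\eqref{eq:wPfX} equals $w_{-}\!\left(\mathbb{P}\!\left\{X>s\right\}\right)$. Likewise, since $u_{-}\!\left(+\infty\right)=+\infty$ together with $u_{-}\!\left(0\right)=0$, continuity and strict monotonicity make $u_{-}$ a bijection of $\left[\left.0,+\infty\right)\right.$ onto itself, so for every $y\geq0$ we have $\left\{u_{-}\!\left(X\right)>y\right\}=\left\{X>u_{-}^{-1}\!\left(y\right)\right\}$; thus the integral on the right becomes $\int_{0}^{+\infty} w_{-}\!\left(\mathbb{P}\!\left\{X>u_{-}^{-1}\!\left(y\right)\right\}\right)\,dy$.

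Next I would discard all of the integrand except the range $y\in\left(\left.0,u_{-}\!\left(s\right)\right.\right]$ (legitimate because $w_{-}\geq0$). For such $y$ one has $u_{-}^{-1}\!\left(y\right)\leq s$, hence $\mathbb{P}\!\left\{X>u_{-}^{-1}\!\left(y\right)\right\}\geq\mathbb{P}\!\left\{X>s\right\}$, and the monotonicity of $w_{-}$ gives $w_{-}\!\left(\mathbb{P}\!\left\{X>u_{-}^{-1}\!\left(y\right)\right\}\right)\geq w_{-}\!\left(\mathbb{P}\!\left\{X>s\right\}\right)$ there. Integrating this constant lower bound over an interval of length $u_{-}\!\left(s\right)$ yields
\begin{equation*}
	\int_{0}^{+\infty} w_{-}\!\left(\mathbb{P}\!\left\{u_{-}\!\left(X\right)>y\right\}\right)\,dy\;\geq\; u_{-}\!\left(s\right)\,w_{-}\!\left(\mathbb{P}\!\left\{X>s\right\}\right).
\end{equation*}
Since $s>0$ and $u_{-}$ is strictly increasing with $u_{-}\!\left(0\right)=0$, the factor $u_{-}\!\left(s\right)=u_{-}\!\left(f^{-1}\!\left(t\right)\right)$ is strictly positive, so dividing through and recalling $w_{-}\!\left(\mathbb{P}\!\left\{X>s\right\}\right)=w_{-}\!\left(\mathbb{P}\!\left\{f\!\left(X\right)>t\right\}\right)$ produces exactly~\eqref{eq:wPfX}.

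There is no serious obstacle here; the only points requiring a little care are that $u_{-}^{-1}$ and $f^{-1}$ are genuinely defined on all of $\left[\left.0,+\infty\right)\right.$ (guaranteed by the stated surjectivity assumptions $u_{-}\!\left(+\infty\right)=f\!\left(+\infty\right)=+\infty$), and that the truncation of the integration domain to $\left(\left.0,u_{-}\!\left(s\right)\right.\right]$ only decreases the integral, which holds because the integrand is nonnegative. The restriction $t>0$ is what makes the denominator $u_{-}\!\left(f^{-1}\!\left(t\right)\right)$ nonzero, so the final division is valid.
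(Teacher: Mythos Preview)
Your argument is correct: it is precisely the Chebyshev/Markov-type estimate one expects, obtained by restricting the integration to $\left[0,u_{-}\!\left(f^{-1}\!\left(t\right)\right)\right]$ and using monotonicity of $w_{-}$ and of the tail. The paper does not spell out a proof here but refers to Lemma~3.12 of \citet{rasonyi2013} ``with trivial modifications''; your write-up is exactly the standard argument behind that reference, so there is no meaningful difference in approach.
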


\begin{proof}
	The proof is similar to that of Lemma~3.12 in~\citet{rasonyi2013}, with trivial modifications.%
\end{proof}

\begin{corollary}\label[cor]{cor:PXs}
	Suppose $u_{-}\!\left(+\infty\right)=+\infty$, and let $\delta>0$ be arbitrary. If $w_{\delta}$ is the distortion associated with the utility $u_{-}$ (with parameter $\delta$), then for any $s>0$ we have
	\begin{equation}\label{eq:PXs}
		\mathbb{P}\!\left\{X^{s}>t\right\}\leq \left[\left(u_{-}\right)^{-1}\!\left(\left[\frac{u_{-}\!\left(t^{1/s}\right)}{\int_{0}^{+\infty} w_{\delta}\!\left(\mathbb{P}\!\left\{u_{-}\!\left(X\right)>y\right\}\right)\,dy}\right]^{1/\delta}\right)\right]^{-1}
	\end{equation}
	for all $t>0$ and for all positive random variables $X$.\qed
\end{corollary}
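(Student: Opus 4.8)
The plan is to derive this straight from Lemma~\ref{lem:wPfX} applied to the power map, followed by an elementary rearrangement. Since $s>0$, the function $f\colon\left[\left.0,+\infty\right)\right.\rightarrow\left[\left.0,+\infty\right)\right.$ given by $f\!\left(x\right)=x^{s}$ is continuous and strictly increasing with $f\!\left(0\right)=0$, $f\!\left(+\infty\right)=+\infty$, and $f^{-1}\!\left(t\right)=t^{1/s}$. Hence, taking $w_{-}=w_{\delta}$ in~\eqref{eq:wPfX}, for every positive random variable $X$ and every $t>0$ we obtain
\begin{equation*}
	w_{\delta}\!\left(\mathbb{P}\!\left\{X^{s}>t\right\}\right)\leq\frac{1}{u_{-}\!\left(t^{1/s}\right)}\int_{0}^{+\infty}w_{\delta}\!\left(\mathbb{P}\!\left\{u_{-}\!\left(X\right)>y\right\}\right)\,dy .
\end{equation*}
Write $p\triangleq\mathbb{P}\!\left\{X^{s}>t\right\}$ and $I\triangleq\int_{0}^{+\infty}w_{\delta}\!\left(\mathbb{P}\!\left\{u_{-}\!\left(X\right)>y\right\}\right)\,dy$, and note that $u_{-}\!\left(t^{1/s}\right)>0$ because $t>0$ and $u_{-}$ is strictly increasing with $u_{-}\!\left(0\right)=0$.

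Next I would dispose of the degenerate cases. If $p=0$, then~\eqref{eq:PXs} holds trivially, since its right-hand side is nonnegative (the reciprocal, with the convention $0^{-1}=+\infty$, of the value at a nonnegative argument of the strictly increasing bijection $\left(u_{-}\right)^{-1}$). If $p>0$, then $\left\{X^{s}>t\right\}\subseteq\left\{u_{-}\!\left(X\right)>y\right\}$ for every $y<u_{-}\!\left(t^{1/s}\right)$, so $\mathbb{P}\!\left\{u_{-}\!\left(X\right)>y\right\}\geq p$ there and hence $I\geq u_{-}\!\left(t^{1/s}\right)\,w_{\delta}\!\left(p\right)>0$; if moreover $I=+\infty$, then the right-hand side of~\eqref{eq:PXs} equals $+\infty$ (again by $0^{-1}=+\infty$) and there is nothing to prove. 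Thus it remains to handle $p\in\left(0,1\right]$ with $I\in\left(0,+\infty\right)$.

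In that case I would use the appendix normalisation $u_{-}\!\left(1\right)=1$, so that definition~\eqref{eq:wdelta} gives $w_{\delta}\!\left(p\right)=\left[u_{-}\!\left(1/p\right)\right]^{-\delta}$ (valid since $1/p\geq1>0$ lies in $\left.\left(0,1\right.\right]$ after reciprocation into the domain of~\eqref{eq:wdelta}), and the displayed inequality reads $\left[u_{-}\!\left(1/p\right)\right]^{-\delta}\leq I/u_{-}\!\left(t^{1/s}\right)$. Both sides being strictly positive, applying the strictly decreasing map $x\mapsto x^{-1/\delta}$ yields $u_{-}\!\left(1/p\right)\geq\left[u_{-}\!\left(t^{1/s}\right)/I\right]^{1/\delta}$, and applying the increasing function $\left(u_{-}\right)^{-1}$ gives $1/p\geq\left(u_{-}\right)^{-1}\!\left(\left[u_{-}\!\left(t^{1/s}\right)/I\right]^{1/\delta}\right)$. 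This last quantity is strictly positive, since its argument is positive and $\left(u_{-}\right)^{-1}\!\left(0\right)=0$ with $\left(u_{-}\right)^{-1}$ strictly increasing, so taking reciprocals preserves the inequality and produces exactly~\eqref{eq:PXs}.

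I do not expect any genuine obstacle: the entire content is Lemma~\ref{lem:wPfX}, and the only care required is bookkeeping --- checking that each quantity appearing is strictly positive so that the reciprocals and the application of $\left(u_{-}\right)^{-1}$ are legitimate, and confirming that the boundary cases $p=0$ and $I=+\infty$ are absorbed by the natural conventions on $0^{-1}$.
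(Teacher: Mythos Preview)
Your proof is correct and follows exactly the approach implicit in the paper: the corollary is stated with a bare \qed\ because it is meant to follow immediately from Lemma~\ref{lem:wPfX} applied with $f\!\left(x\right)=x^{s}$ and $w_{-}=w_{\delta}$, together with the explicit inversion of $w_{\delta}$ afforded by~\eqref{eq:wdelta} under the normalisation $u_{-}\!\left(1\right)=1$. Your careful handling of the degenerate cases $p=0$ and $I=+\infty$ and the positivity checks are more than the paper spells out, but the route is the same.
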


\begin{lemma}\label[lem]{lem:EXeta}
	Suppose $u_{-}\!\left(+\infty\right)=+\infty$ and $\delta\in\left(0,1\right)$. Let Assumption \ref{as:existsxi} be satisfied, and let the decreasing function $G:\left(0,+\infty\right)\rightarrow \left[\left.1,+\infty\right)\right.$ and the real number $\zeta>1$ be those given by \Cref{lem:zetaG}.	Then, for every $\eta\in\left(1,\zeta\right)$, there exists a constant $C>0$ such that, for all positive random variables $X$, we have
	\begin{equation}\label{eq:EXeta}
		\mathbb{E}_{\mathbb{P}}\!\left[X^{\eta}\right]\leq C+\frac{\left[G\!\left(\left[V_{\delta}\!\left(X\right)\right]^{-1}\right)\right]^{\eta}}{\left(u_{-}\right)^{-1}\!\left(\left[V_{\delta}\!\left(X\right)\right]^{-1/\delta}\right)},
	\end{equation}
	with $V_{\delta}\!\left(X\right)\triangleq \int_{0}^{+\infty} w_{\delta}\!\left(\mathbb{P}\!\left\{u_{-}\!\left(X\right)>y\right\}\right)\,dy$.
\end{lemma}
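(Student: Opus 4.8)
The plan is to reduce $\mathbb{E}_{\mathbb{P}}[X^\eta]$ to a tail integral, cut that integral at the level $G\!\left([V_\delta(X)]^{-1}\right)$, estimate the ``low'' part by the trivial bound $\mathbb{P}\{X>x\}\le1$ and the ``high'' part by the polynomial tail decay one reads off from \Cref{cor:PXs} together with \Cref{lem:zetaG}, and then finish by a dichotomy according to whether $V_\delta(X)\ge1$ or $V_\delta(X)<1$.

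First I would discard the degenerate values: if $V_\delta(X)=0$ then $\mathbb{P}\{u_-(X)>y\}=0$ for every $y>0$, so $X=0$ $\mathbb{P}$-a.s.\ and the inequality is trivial, while if $V_\delta(X)=+\infty$ the right-hand side of \eqref{eq:EXeta} is $+\infty$. Hence I may write $V\triangleq V_\delta(X)\in(0,+\infty)$, $g\triangleq G(1/V)\ge1$ and $p\triangleq(u_-)^{-1}\!\left(V^{-1/\delta}\right)>0$. By Tonelli's theorem and the substitution $t=x^\eta$,
\[
\mathbb{E}_{\mathbb{P}}[X^\eta]=\int_{0}^{+\infty}\mathbb{P}\{X^\eta>t\}\,dt=\eta\int_{0}^{+\infty}x^{\eta-1}\,\mathbb{P}\{X>x\}\,dx .
\]
On $(0,g)$ I would use merely $\mathbb{P}\{X>x\}\le1$, giving $\eta\int_{0}^{g}x^{\eta-1}\,dx=g^\eta$. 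On $[g,+\infty)$, \Cref{cor:PXs} with $s=1$ yields $\mathbb{P}\{X>x\}\le\left[(u_-)^{-1}\!\left([u_-(x)/V]^{1/\delta}\right)\right]^{-1}$, while \Cref{lem:zetaG} applied with $\lambda=1/V$ gives $u_-(x^\zeta)\le[u_-(x)/V]^{1/\delta}$, i.e.\ $x^\zeta\le(u_-)^{-1}\!\left([u_-(x)/V]^{1/\delta}\right)$, for all $x\ge g$; combining the two, $\mathbb{P}\{X>x\}\le x^{-\zeta}$ on $[g,+\infty)$. Since $\eta<\zeta$ and $g\ge1$,
\[
\eta\int_{g}^{+\infty}x^{\eta-1}\,\mathbb{P}\{X>x\}\,dx\le\eta\int_{g}^{+\infty}x^{\eta-1-\zeta}\,dx=\frac{\eta}{\zeta-\eta}\,g^{\eta-\zeta}\le\frac{\eta}{\zeta-\eta},
\]
so that altogether $\mathbb{E}_{\mathbb{P}}[X^\eta]\le g^\eta+\dfrac{\eta}{\zeta-\eta}$.

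It then remains to manufacture the factor $1/p$. If $V\ge1$, then $V^{-1/\delta}\le1=u_-(1)$, so $p=(u_-)^{-1}\!\left(V^{-1/\delta}\right)\le1$ and hence $g^\eta\le g^\eta/p$, which gives $\mathbb{E}_{\mathbb{P}}[X^\eta]\le g^\eta/p+\dfrac{\eta}{\zeta-\eta}$. If instead $V<1$, then $1/V>1$, and since $G$ is decreasing, $g=G(1/V)\le G(1)$, whence $\mathbb{E}_{\mathbb{P}}[X^\eta]\le G(1)^\eta+\dfrac{\eta}{\zeta-\eta}$. Taking $C\triangleq G(1)^\eta+\dfrac{\eta}{\zeta-\eta}$, which depends only on $\delta$, $\eta$ and $\zeta$ (through $G$) and not on $X$, one obtains in both cases $\mathbb{E}_{\mathbb{P}}[X^\eta]\le C+g^\eta/p$, that is, precisely \eqref{eq:EXeta}.

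The computation is short, so the only genuine subtlety is recognising that one need not squeeze the factor $1/p$ out of the low-$x$ portion of the integral: the crude bound $\mathbb{P}\{X>x\}\le1$ there is sufficient once the dichotomy on $V$ is exploited, because for $V\ge1$ the factor $1/p$ is automatically at least $1$, whereas for $V<1$ the quantity $G(1/V)$ is bounded and the entire right-hand side collapses to the constant $C$. The other point requiring a little care is the preliminary exclusion of $V_\delta(X)\in\{0,+\infty\}$, needed for the right-hand side of \eqref{eq:EXeta} to be meaningful.
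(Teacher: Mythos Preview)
Your proof is correct and follows the same overall strategy as the paper's: express $\mathbb{E}_{\mathbb{P}}[X^\eta]$ as a tail integral, use \Cref{cor:PXs} together with \Cref{lem:zetaG} to obtain the polynomial decay $\mathbb{P}\{X>x\}\le x^{-\zeta}$ for $x\ge g$, and bound the remaining low part. The one genuine difference lies in how the factor $1/p$ enters. The paper splits the $t$-integral into three pieces, $[0,1]$, $[1,g^\eta]$ and $[g^\eta,+\infty)$, and on the middle piece uses directly that $u_-(t^{1/\eta})\ge u_-(1)=1$ for $t\ge1$, so that the \Cref{cor:PXs} bound there already reads $\le 1/p$; this yields the contribution $(g^\eta-1)/p$ immediately and the constant $C=1+\eta/(\zeta-\eta)$. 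You instead take the cruder bound $\mathbb{P}\{X>x\}\le1$ on all of $(0,g)$ and recover the $1/p$ factor a posteriori via your dichotomy on $V$, which costs a larger constant $C=G(1)^\eta+\eta/(\zeta-\eta)$ but is perfectly valid. Both arguments rely on the normalisation $u_-(1)=1$ fixed at the start of the appendix.
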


\begin{proof}
	Fix $\delta\in\left(0,1\right)$ and $\eta\in\left(1,\zeta\right)$, and let $X$ be a positive random variable. If $X=0$ $\mathbb{P}$-a.s., then $\mathbb{E}_{\mathbb{P}}\!\left[X^{\eta}\right]=0$ and $V_{-}^{\delta}\!\left(X\right)=0$, hence the inequality~\eqref{eq:EXeta} is satisfied trivially for any $C>0$. So suppose now that $\mathbb{P}\!\left\{X>0\right\}>0$, which implies $V_{\delta}\!\left(X\right)>0$. Using \Cref{cor:PXs},
	\begin{equation}\label{eq:aux1}
		\mathbb{E}_{\mathbb{P}}\!\left[X^{\eta}\right]=\int_{0}^{\infty} \mathbb{P}\!\left\{X^{\eta}>t\right\}\,dt\leq 1+\int_{1}^{+\infty} \left[u^{-1}\!\left(\left[\frac{u\!\left(t^{1/\eta}\right)}{V_{\delta}\!\left(X\right)}\right]^{1/\delta}\right)\right]^{-1} dt,
	\end{equation}
	for any positive random variable $X$.

We apply \Cref{lem:zetaG} to obtain, %
for all $x\geq G\!\left(1/V_{\delta}\!\left(X\right)\right)$,
	\begin{equation*}
		\left(u_{-}\right)^{-1}\left(\left[\frac{u_{-}\!\left(x\right)}{V_{\delta}\!\left(X\right)}\right]^{1/\delta}\right)\geq x^{\zeta},
	\end{equation*}
	where we have also made use of the fact that $\left(u_{-}\right)^{-1}$ is strictly increasing. On the other hand, it follows again from the monotonicity of both $u_{-}$ and $\left(u_{-}\right)^{-1}$ that
	\begin{equation*}
		\left(u_{-}\right)^{-1}\!\left(\left[\frac{u_{-}\!\left(t^{1/\eta}\right)}{V_{\delta}\!\left(X\right)}\right]^{1/\delta}\right)\geq \left(u_{-}\right)^{-1}\!\left(\left[\frac{1}{V_{\delta}\!\left(X\right)}\right]^{1/\delta}\right)
	\end{equation*}
	for all $t\geq 1$. Thus, the preceding facts and the change of variables $x=t^{1/\eta}$ yield
	\begin{align}\label{eq:aux2}
		\lefteqn{\int_{1}^{+\infty} \left[u^{-1}\!\left(\left[\frac{u\!\left(t^{1/\eta}\right)}{V_{\delta}\!\left(X\right)}\right]^{1/\delta}\right)\right]^{-1} dt}\nonumber\\
		&\leq \int_{1}^{\left[G\!\left(1/V_{\delta}\!\left(X\right)\right)\right]^{\eta}} \left[\left(u_{-}\right)^{-1}\!\left(\left[\frac{1}{V_{\delta}\!\left(X\right)}\right]^{1/\delta}\right)\right]^{-1}dt\nonumber\\
		&\quad \hspace{5cm} +\eta\int_{G\!\left(1/V_{\delta}\!\left(X\right)\right)}^{+\infty} \frac{\left[u^{-1}\!\left(\left[\frac{u\!\left(x\right)}{V_{\delta}\!\left(X\right)}\right]^{1/\delta}\right)\right]^{-1}}{x^{1-\eta}}\,dx\nonumber\\
		&\leq \frac{\left[G\!\left(\left[V_{\delta}\!\left(X\right)\right]^{-1}\right)\right]^{\eta}-1}{\left(u_{-}\right)^{-1}\!\left(\left[V_{\delta}\!\left(X\right)\right]^{-1/\delta}\right)}+\eta \int_{1}^{+\infty} \frac{1}{x^{1+\zeta-\eta}}\,dx,
	\end{align}
	and we note that the second integral is finite because $\zeta-\eta>0$.
	
	Hence, plugging~\eqref{eq:aux2} into \eqref{eq:aux1}, setting $C\triangleq 1+\eta \int_{1}^{+\infty} \frac{1}{x^{1+\zeta-\eta}}\,dx\in\left(1,+\infty\right)$ and noting that $\left[G\!\left(\left[V_{\delta}\!\left(X\right)\right]^{-1}\right)\right]^{\eta}-1\leq \left[G\!\left(\left[V_{\delta}\!\left(X\right)\right]^{-1}\right)\right]^{\eta}$ allows us to finally deduce the claimed inequality.
\end{proof}

\begin{proof}[Proof of \Cref{th:Suff}]
	Essentially, we shall follow the proof of The\-o\-rem~4.7 in~\citet{rasonyi2013}, while borrowing some key ideas from \citet{reichlin2012}.
	
We begin by taking a maximising sequence $\left\{\phi^{(n)}\text{; }n\in\mathbb{N}\right\}\subseteq \Psi\!\left(x_{0}\right)$, that is, a sequence of admissible trading strategies $\phi^{(n)}$ such that
	\begin{equation*}
		\lim_{n\rightarrow+\infty} V\!\left(\Pi^{\phi^{(n)}}_{T}\right)=V^{*}\!\left(x_{0}\right).
	\end{equation*}
	We shall henceforth denote by $X_{n}$ the terminal wealth of the $n$-th portfolio $\phi^{(n)}$. We clearly have $\inf_{n\in\mathbb{N}} V\!\left(X_{n}\right)>-\infty$. Moreover, we get $\sup_{n\in\mathbb{N}} V_{+}\!\left(X_{n}^{+}\right)<+\infty$ from \Cref{prop:boundeduplus}, hence also
	\begin{equation*}
		\sup_{n\in\mathbb{N}} V_{-}\!\left(X_{n}^{-}\right)\leq \sup_{n\in\mathbb{N}} V_{+}\!\left(X_{n}^{+}\right)-\inf_{n\in\mathbb{N}} V\!\left(X_{n}\right)<+\infty.
	\end{equation*}
	Noting that $w_{-}\geq w_{\delta}$ implies
	\begin{equation*}
		\frac{\left[G\!\left(\left[V_{\delta}\!\left(X_{n}\right)\right]^{-1}\right)\right]^{\eta}}{\left(u_{-}\right)^{-1}\!\left(\left[V_{\delta}\!\left(X_{n}\right)\right]^{-1/\delta}\right)}\leq
\frac{\left[G\!\left(\left[V_{-}\!\left(X_{n}\right)\right]^{-1}\right)\right]^{\eta}}{\left(u_{-}\right)^{-1}\!\left(\left[V_{-}\!\left(X_{n}\right)\right]^{-1/\delta}\right)},
	\end{equation*}
for every $n\in\mathbb{N}$, it then follows from \Cref{lem:EXeta} that $\sup_{n\in\mathbb{N}} \mathbb{E}_{\mathbb{P}}\!\left[\left(X_{n}^{-}\right)^{\eta}\right]<+\infty$, for some $\eta>1$.
	
	Next, $\mathbb{E}_{\mathbb{Q}}\!\left[X_{n}^{+}\right]=x_{0}+\mathbb{E}_{\mathbb{Q}}\left[X_{n}^{-}\right]$,
\Cref{as:rhoinW} and H\"{o}lder's inequality allow us to obtain that $\sup_{n\in\mathbb{N}} \mathbb{E}_{\mathbb{P}}\!\left[\left|X_{n}\right|^{\tau}\right]<+\infty$ for every $\tau\in\left(0,1\right)$ (see the proof of Theorem~4.7 in~\citet{rasonyi2013} for details). 
%
	From this, it is now immediate to conclude that the family $\left\{\mathbb{P}_{X_{n}}\text{; }n\in\mathbb{N}\right\}$, where $\mathbb{P}_{X_{n}}$ denotes the law of the random variable $X_{n}$ with respect to $\mathbb{P}$, is tight. Thus, by Prokhorov's theorem we can extract a weakly convergent subsequence 
$\mathbb{P}_{X_{n_{k}}} \stackrel{w}{\longrightarrow} \nu$ for some probability measure $\nu$.

	Now let $q_{\rho}^{\mathbb{P}}$ denote the quantile function of $\rho$ with respect to $\mathbb{P}$, which is unique up to a set of Lebesgue measure zero.%
	\footnote{We recall that the unique (up to a set of Lebesgue measure zero) \emph{quantile function} of the random variable $\rho$ with respect to the probability measure $\mathbb{P}$, $q_{\rho}^{\mathbb{P}}:\left(0,1\right)\rightarrow \mathbb{R}$, is a generalised inverse of $F_{\rho}^{\mathbb{P}}$, i.e., it is such that
	\begin{equation*}
		 F_{\rho}^{\mathbb{P}}\!\left(q_{\rho}^{\mathbb{P}}\!\left(p\right)-\right)\leq p\leq F_{\rho}^{\mathbb{P}}\!\left(q_{\rho}^{\mathbb{P}}\!\left(p\right)\right)\qquad \text{for any level } p\in\left(0,1\right),
	\end{equation*}
	where $F_{\rho}^{\mathbb{P}}\!\left(x-\right)\triangleq \lim_{s\uparrow x} F_{\rho}^{\mathbb{P}}\!\left(s\right)=\mathbb{P}\!\left\{\rho<x\right\}$. Analogously, given a probability law $\nu$ on the Borel $\sigma$-algebra $\mathscr{B}\!\left(\mathbb{R}\right)$, its quantile function $q_{\nu}$ is the generalised inverse of the distribution function given by $F_{\nu}\!\left(x\right)\triangleq \nu\!\left(\left.\left(-\infty,x\right.\right]\right)$ for any $x\in\mathbb{R}$. The reader is referred to \citet[Appendix~A.3]{follmer04} for a thorough study of quantile functions, their properties and related results.} %
	Then, by our \Cref{as:continuousCDFrho}, the $\sigma\!\left(\rho\right)$-measurable random variable $U\triangleq F_{\rho}^{\mathbb{P}}\!\left(\rho\right)$ follows under $\mathbb{P}$ a uniform distribution on the interval $\left(0,1\right)$, and moreover $\rho=q_{\rho}^{\mathbb{P}}\!\left(U\right)$ $\mathbb{P}$-a.s..
	
	So let us set $X_{*}\triangleq q_{\nu}\!\left(1-U\right)$, which is clearly a $\sigma\!\left(\rho\right)$-measurable random variable. In addition, because $1-U$ is uniformly distributed on $\left(0,1\right)$ under $\mathbb{P}$, we conclude that $X_{*}$ has probability law $\nu$, hence 
$X_{n_{k}}\stackrel{\mathscr{D}}{\longrightarrow} X_{*}$.

	Since $\sup_{n\in\mathbb{N}} V_{\pm}\!\left(X_{n}^{\pm}\right)<+\infty$, it can be shown, exactly as in part~\emph{(i)} of the proof of Theorem~4.7 in \citet{rasonyi2013}, that $V_{\pm}\!\left(X_{*}^{\pm}\right)<+\infty$.

Trivially, we have $0\leq w_{+}\!\left(\mathbb{P}\!\left\{u_{+}\!\left(X_{n_{k}}^{+}\right)>y\right\}\right)\leq \bbOne_{\left[0,M\right]}\!\left(y\right)$ for all $k\in\mathbb{N}$ and for every $y\geq 0$, so 
the Fatou lemma implies $V\!\left(X_{*}\right)\geq V^{*}\!\left(x_0\right)$.

	It remains to check that $\mathbb{E}_{\mathbb{Q}}\!\left[X_{*}\right]\leq x_{0}$. This will be done using an argument of \citet[Proof of Proposition~4.1, p.~16]{reichlin2012}. We remark, however, that some modifications are required to account for the fact that, in our paper, wealth is allowed to become negative.
	
	It is immediate to get that $\mathbb{E}_{\mathbb{Q}}\!\left[X_{*}\right]$ equals
	\begin{equation*}
		\mathbb{E}_{\mathbb{P}}\!\left[\rho\,X_{*}\right]=\mathbb{E}_{\mathbb{P}}\!\left[q_{\rho}^{\mathbb{P}}\!\left(U\right) q_{\nu}\!\left(1-U\right)\right]=\int_{0}^{1} q_{\rho}^{\mathbb{P}}\!\left(x\right) q_{\nu}\!\left(1-x\right)\,dx.
	\end{equation*}
	Furthermore, $q_{\rho}^{\mathbb{P}}$ is positive a.e.\ on $\left(0,1\right)$ because $\rho>0$ a.s., and the fact that the family $\left\{X_{n_{k}}\text{; }k\in\mathbb{N}\right\}$ converges in distribution to $X_{*}$ implies that the sequence of quantile functions $\left\{q^{\mathbb{P}}_{X_{n_{k}}}\text{; }k\in\mathbb{N}\right\}$ converges to $q_{\nu}$ a.e.\ on $\left(0,1\right)$.
	
	Thus, since the positive part function is increasing and continuous, we can combine Fatou's lemma with one of the Hardy-Littlewood inequalities (we refer for instance to \citet[Theorem~A.24]{follmer04}) to obtain
	\begin{align*}
		\int_{0}^{1} q_{\rho}^{\mathbb{P}}\!\left(x\right) \left[q_{\nu}\!\left(1-x\right)\right]^{+}\,dx&\leq \liminf_{k\rightarrow +\infty} \int_{0}^{1} q_{\rho}^{\mathbb{P}}\!\left(x\right) \left[q^{\mathbb{P}}_{X_{n_{k}}}\!\left(1-x\right)\right]^{+}\,dx\\
		&= \liminf_{k\rightarrow +\infty} \int_{0}^{1} q_{\rho}^{\mathbb{P}}\!\left(x\right) q^{\mathbb{P}}_{X^{+}_{n_{k}}}\!\left(1-x\right)\,dx\\
		&\leq \liminf_{k\rightarrow +\infty} \mathbb{E}_{\mathbb{P}}\!\left[\rho\,X_{n_{k}}^{+}\right],
	\end{align*}
	where the equality is a trivial consequence of $\left[q^{\mathbb{P}}_{X_{n_{k}}}\!\left(x\right)\right]^{+}=q^{\mathbb{P}}_{X^{+}_{n_{k}}}\!\left(x\right)$ for a.e.\ $x\in\left(0,1\right)$. On the other hand, it follows from the second Hardy-Littlewood inequality that
	\begin{equation*}
		\mathbb{E}_{\mathbb{P}}\!\left[\rho\,X_{n_{k}}^{-}\right]\leq \int_{0}^{1} q_{\rho}^{\mathbb{P}}\!\left(x\right) q^{\mathbb{P}}_{X^{-}_{n_{k}}}\!\left(x\right)\,dx,
	\end{equation*}
	for every $k\in\mathbb{N}$.
	
	But the family of a.e.\ positive functions $\left\{q_{\rho}^{\mathbb{P}} q^{\mathbb{P}}_{X^{-}_{n_{k}}}\text{; }k\in\mathbb{N}\right\}$ is uniformly integrable on $\left(0,1\right)$. Indeed, we can choose some $\eta'>1$ such that $\eta'<\eta$, and so H\"{o}lder's inequality with $\eta/\eta'>1$ yields, for all $k\in\mathbb{N}$,
	\begin{align*}
		\int_{0}^{1} \left[q_{\rho}^{\mathbb{P}}\!\left(x\right)q_{X_{n_{k}}^{-}}^{\mathbb{P}}\!\left(x\right)\right]^{\eta'}dx&\leq \mathbb{E}_{\mathbb{P}}\!\left[\left(q_{\rho}^{\mathbb{P}}\!\left(U\right)\right)^{\frac{\eta\,\eta'}{\eta-\eta'}}\right]^{\frac{1}{\eta'}-\frac{1}{\eta}}\mathbb{E}_{\mathbb{P}}\!\left[\left(q_{X_{n_{k}}^{-}}^{\mathbb{P}}\!\left(U\right)\right)^{\eta}\right]^{\frac{\eta'}{\eta}}\\
		&= C\,\mathbb{E}_{\mathbb{P}}\!\left[\left(X_{n_{k}}^{-}\right)^{\eta}\right]^{\frac{\eta'}{\eta}}\leq C\left(\sup_{n\in\mathbb{N}}\mathbb{E}_{\mathbb{P}}\!\left[\left(X_{n}^{-}\right)^{\eta}\right]\right)^{\frac{\eta'}{\eta}}<+\infty,
	\end{align*}
	for some $C>0$, where we use that each random variable $q_{X_{n_{k}}^{-}}^{\mathbb{P}}\!\left(U\right)$ has the same distribution as $X_{n_{k}}^{-}$, and we invoke \Cref{as:rhoinW}. Hence, by de la Vall\'{e}e-Poussin's lemma, the claim follows.
	
	The negative part function is also decreasing, so $\left[q^{\mathbb{P}}_{X_{n_{k}}}\!\left(x\right)\right]^{-}=q^{\mathbb{P}}_{X^{-}_{n_{k}}}\!\left(1-x\right)$ for a.e.\ $x\in\left(0,1\right)$ and for any $k\in\mathbb{N}$. Moreover, it is a continuous function as well, thus $\lim_{k} q^{\mathbb{P}}_{X^{-}_{n_{k}}}\!\left(x\right)=\left[q_{\nu}\!\left(1-x\right)\right]^{-}$ for a.e.\ $x\in\left(0,1\right)$. Therefore, these facts combined with uniform integrability give that
	\begin{equation*}
		\lim_{k\rightarrow+\infty} \int_{0}^{1} q_{\rho}^{\mathbb{P}}\!\left(x\right)q_{X_{n_{k}}^{-}}^{\mathbb{P}}\!\left(x\right)\,dx=\int_{0}^{1} q_{\rho}^{\mathbb{P}}\!\left(x\right)\left[q_{\nu}\!\left(1-x\right)\right]^{-}\,dx.
	\end{equation*}
	
	Consequently, it follows from the admissibility of each $X_{n_{k}}$, from the super-additivity of the $\liminf$, and from the preceding inequalities that
	\begin{align*}
		x_{0}&=\liminf_{k\rightarrow+\infty} \mathbb{E}_{\mathbb{P}}\!\left[\rho\,X_{n_{k}}\right]\\
		&\geq \int_{0}^{1} q_{\rho}^{\mathbb{P}}\!\left(x\right)\left[q_{\nu}\!\left(1-x\right)\right]^{+}\,dx-\lim_{k\rightarrow+\infty} \int_{0}^{1} q_{\rho}^{\mathbb{P}}\!\left(x\right)q_{X_{n_{k}}^{-}}^{\mathbb{P}}\!\left(x\right)\,dx\\
		&=\int_{0}^{1} q_{\rho}^{\mathbb{P}}\!\left(x\right)\left[q_{\nu}\!\left(1-x\right)\right]^{+}\,dx-\int_{0}^{1} q_{\rho}^{\mathbb{P}}\!\left(x\right)\left[q_{\nu}\!\left(1-x\right)\right]^{-}\,dx=\mathbb{E}_{\mathbb{Q}}\!\left[X_{*}\right],
	\end{align*}
	as intended. Finally, it is also straightforward to check that $X_{*}$ belongs to $L^{1}\!\left(\mathbb{Q}\right)$, since
	\begin{equation*}
		\mathbb{E}_{\mathbb{Q}}\!\left[\left|X_{*}\right|\right]=\mathbb{E}_{\mathbb{Q}}\!\left[X_{*}\right]+2\,\mathbb{E}_{\mathbb{Q}}\!\left[X_{*}^{-}\right]\leq x_{0}+2\,\lim_{k\in\mathbb{N}} \int_{0}^{1} q_{\rho}^{\mathbb{P}}\!\left(x\right)q_{X_{n_{k}}^{-}}^{\mathbb{P}}\!\left(x\right)\,dx<+\infty,
	\end{equation*}
	hence, by \Cref{as:kindcompl}, $X_{*}$ admits a replicating portfolio $\phi^{*}$ from initial capital $\mathbb{E}_{\mathbb{Q}}\!\left[X_{*}\right]\leq x_{0}$. A fortiori, with initial capital $x_{0}$ one also has $V(\Pi_{T}^{\phi^{*}})\geq V^{*}\!\left(x_{0}\right)$, so $\phi^{*}$ is an optimal strategy.
\end{proof}

\begin{proof}[Proof of \Cref{lem:AEzfinite}]
	Fix $\delta\in\left(0,1\right)$ arbitrary and choose $\varsigma\in\left(1,\delta^{-1/\gamma}\right)$. Then, for every $x\geq \underline{x}$, we have $z\!\left(x\right)-\delta\,z\!\left(\varsigma x\right)\geq z\!\left(x\right)\left[1-\delta\,\varsigma^{\gamma}\right]$. %
	Since $z\!\left(+\infty\right)=+\infty$ and $\delta\,\varsigma^{\gamma}<1$, we obtain that $\liminf_{x\rightarrow+\infty} \left[z\!\left(x\right)-\delta\,z\!\left(\varsigma x\right)\right]=+\infty$, and finally we use \Cref{lem:zetaG} to infer that \Cref{as:existsxi} holds true.
\end{proof}


\end{document}